\def\ps@headings{%
\def\@oddhead{\mbox{}\scriptsize\rightmark \hfil \thepage}%
\def\@evenhead{\scriptsize\thepage \hfil \leftmark\mbox{}}%
\def\@oddfoot{}
\def\@evenfoot{}}
\def\blfootnote{\xdef\@thefnmark{}\@footnotetext}
\newtheorem{definition}{Definition}
\newtheorem{lemma}{Lemma}
\newtheorem{theorem}{Theorem}
\newtheorem{example}{Example}
\def\baselinestretch{0.86}
\begin{document}
\title{Scheduling Heterogeneous Real-Time Traffic over Fading Wireless Channels}
\author{
\IEEEauthorblockN{I-Hong Hou} \IEEEauthorblockA{CSL and Department
of CS\\University of Illinois\\Urbana, IL 61801,
USA\\ihou2@illinois.edu} \and \IEEEauthorblockN{P. R.
Kumar}\IEEEauthorblockA{CSL and Department of ECE\\University of
Illinois\\Urbana, IL 61801, USA\\prkumar@illinois.edu}
 }
\maketitle\blfootnote{This material is based upon work partially
supported by USARO under Contract Nos. W911NF-08-1-0238 and
W-911-NF-0710287, AFOSR under Contract FA9550-09-0121, and NSF under
Contract Nos. CNS-07-21992, ECCS-0701604, CNS-0626584, and
CNS-05-19535. Any opinions, findings, and conclusions or
recommendations expressed in this publication are those of the
authors and do not necessarily reflect the views of the above
agencies. }

\begin{abstract}
We develop a general approach for designing scheduling policies for
real-time traffic over wireless channels. We extend prior work,
which characterizes a real-time flow by its traffic pattern, delay
bound, timely-throughput requirement, and channel reliability, to
allow time-varying channels, allow clients to have different
deadlines, and allow for the optional employment of rate adaptation.
Thus, our model allow the treatment of more realistic fading
channels as well as scenarios with mobile nodes, and the usage of
more general transmission strategies.

We derive a sufficient condition for a scheduling policy to be
feasibility optimal, and thereby establish a class of feasibility
optimal policies. We demonstrate the utility of the identified class
by deriving a feasibility optimal policy for the scenario with rate
adaptation, time-varying channels, and heterogeneous delay bounds.
When rate adaptation is not available, we also derive a feasibility
optimal policy for time-varying channels. For the scenario where
rate adaptation is not available but clients have different delay
bounds, we describe a heuristic. Simulation results are also
presented which indicate the usefulness of the scheduling policies
for more realistic and complex scenarios.
\end{abstract}

\section{Introduction}  \label{section:introduction}

With the wide deployment of Wireless Local Area Networks (WLANs) and
advances in multimedia technology, wireless networks are
increasingly being used to carry real-time traffic, such as VoIP and
video streaming. These applications usually specify throughput
requirement while meeting specified delay bounds. We study the
problem of designing scheduling policies for such applications.

While there has been much research on scheduling real-time traffic
over wireline networks, the results are not directly applicable to
wireless networks where channels are unreliable, with qualities that
may be time-varying either due to fading or node mobility. Also,
individual clients may impose differing delay requirements. These
features present new challenges to the scheduling problems.

We consider the scenario where an Access Point (AP) is required to
serve real-time traffic for a set of clients. A previous work
\cite{IHH09MobiHoc} solves the scheduling problem in a restrictive
environment and proposes two feasibility optimal policies. In
particular, it assumes a fixed transmission rate, a static channel
model, and that all clients in the system require the same delay
bound. We extend this model so that it can capture the traffic
patterns, delay bounds, timely-throughput bounds, and delivery ratio
bounds of clients, for time-varying wireless channels. We address
scenarios with and without rate adaptation. We establish a
sufficient condition for a scheduling policy to be feasibility
optimal. Based on this we describe a class of policies and prove
that they are all feasibility optimal.

To demonstrate the utility of the class of policies, we study three
particular scenarios of interest. The first scenario employs rate
adaptation and treats time-varying channels, as well as allowing
different delay bounds for different clients. The other two
scenarios treat the case where rate adaptation is not available. One
scenario considers time-varying channels, while the other considers
the scenario where clients require different delay bounds. For the
former two scenarios, we derive computationally tractable scheduling
policies and prove that they are feasibility optimal. We also obtain
a heuristic for the third scenario.

We have also tested the derived policies using the IEEE 802.11
standard in a simulation environment. The results suggest that the
three policies outperform others, including the policies in
\cite{IHH09MobiHoc}, and a server-centric policy that schedules
packets randomly. In particular, since the policies introduced in
the previous work fail to provide satisfactory performance in the
environments studied here, this suggests that neglecting the facts
that the system can apply rate adaptation, that wireless channels
are time-varying, and the possibility that clients may require
different delay bounds, can result in malperformance of the derived
policies.

Section \ref{section:related work} reviews some of the related work.
Section \ref{section:system model} describes the extension of the
model in \cite{IHH09MobiHoc}. Section \ref{section:schedule}
discusses some useful observations for scheduling and reviews
policies proposed in \cite{IHH09MobiHoc}. In Section
\ref{subsection:extension for time varying channels}, we study an
extension for time-varying channels. In Section
\ref{section:sufficient condition}, we derive a general class of
policies that are feasibility optimal. Based on this class, we
obtain scheduling policies in Sections \ref{section:rate adaptation}
and \ref{section:time varying channels}, and a heuristic in Section
\ref{section:heterogeneous delay}, for different scenarios. In
Section \ref{section:simulation}, we discuss implementation issues
and simulation results. Section \ref{section:conclusion} concludes
the paper.

\section{Related Work}  \label{section:related work}

The problem of providing QoS over unreliable wireless channels has
received growing interest in recent years. Tassiulas and Ephremides
\cite{LT93} have considered the problem in a single-hop network by
assuming ON/OFF channels and derived a throughput-optimal policy.
Though the policy is unaware of packet delay, Neely \cite{MN08} has
shown that average packet delay is constant regardless of the
network size. Andrews et al \cite{MA01} have proposed another policy
that aims to improve packet delay. They have proved that their
policy is also throughput optimal but offer no theoretical bound on
packet delays. Liu, Wang, and Giannakis \cite{QL06} have used a
cross-layer approach to provide differentiated service for a variety
of classes of clients. Grilo, Macedo, and Nunes \cite{AG03} have
proposed a resource-allocation algorithm based on the expected
transmission time of each packet. Since the expected transmission
time may not be an accurate indication of the actual transmission
time, their work cannot provide provable delay guarantees.
Raghunathan et al \cite{VR08} and Shakkottai and Srikant \cite{SS02}
have both approached this problem by analytically demonstrating
algorithms to minimize the total number of expired packets in the
system. Their results, however, cannot provide differentiated
service to different clients. Hou, Borkar, and Kumar \cite{IHH09}
have studied the problem of providing QoS based on delay bounds and
delivery ratio requirements, and proposed two optimal policies under
some restrictive assumptions. Their work has been further extended
to deal with variable-bit-rate traffic \cite{IHH09MobiHoc}. In this
paper, we extend this work to more realistic scenarios, including
rate adaptation, time-varying channels and heterogeneous delay
bounds among clients. Fattah and Leung \cite{HF02} and Cao and Li
\cite{YC01} have surveyed other existing scheduling policies for
providing QoS.

\section{System Model}  \label{section:system model}
We begin by extending the model proposed in \cite{IHH09MobiHoc},
which only considers a static channel condition and fixed delay
bounds for all clients, to account for network behavior and
application requirements for providing QoS in wireless systems.

Consider a wireless system with $N$ clients, $\{1,2,\dots,N\}$, and
one access point (AP). Packets for clients arrive at the AP. Time is
slotted with slots $t\in \{0,1,2,\dots\}$. Time slots are further
grouped into \textit{periods} $[kT, (k+1)T)$ with period length $T$.
Packets arrive at the AP at the beginning of each period, at time
slots $\{0,T,2T,\dots\}$, probabilistically, with no more than one
packet per client. We model the packet arrivals as a stationary,
irreducible Markov process with finite state. The average
probability that packets arrive for subset $S$ of clients is $R(S)$.
Packet arrivals can be dependent between clients, and packet
arrivals in a period can depend on other periods.

Each client $n$ specifies a delay requirement $\tau_n$, with
$\tau_n\leq T$. If the packet for client $n$ is not delivered by the
${\tau_n}^{th}$ time slot of the period, the packet expires and is
discarded. This scheme applies naturally to a wide range of
server-centric wireless communication technologies, such as IEEE
802.11 Point Coordination Function (PCF), WiMax, and Bluetooth.

We consider an unreliable, heterogeneous, and time-varying channel
model. We model the channel condition as a stationary, irreducible
Markov process with a finite set of channel states $\mathcal{C}$.
The average probability that channel state $c$ occurs is $f_c$ and
the channel state remains constant within each period. We consider
the system both with rate adaptation and without. When rate
adaptation is not available, that is, when all packets are
transmitted at a fixed rate, the AP can make exactly one
transmission in each time slot. Under channel state $c$, the link
reliability between the AP and client $n$ is $p_{c,n}$, so that a
packet transmitted by the AP for client $n$ is delivered with
probability $p_{c,n}$. On the other hand, when the system uses rate
adaptation, the channel states describe the maximal rates that can
be supported between the AP and clients, which in turn decide the
service times for transmissions. Under channel state $c$, it takes
$s_{c,n}$ time slots to make an error free transmission to client
$n$.

The channel state and the packet arrivals in a period are assumed to
be independent of each other. We also assume that the AP has
knowledge of channel state, as well as whether a transmission is
successful, for example, through ACKs, in which case $p_{c,n}$ is
the probability that the AP receives an ACK after making a
transmission.

Each client $n$ requires a timely-throughput of at least $q_n$
packets per period. Since, on average, there are $\sum_{S:n\in S}
R(S)$ packets for client $n$ per period, this timely-throughput
bound can also be interpreted as a delivery ratio requirement of
$\frac{q_n}{\sum_{S:n\in S} R(S)}$.

\begin{definition}  \label{definition:fulfill}
A set of clients, $\{1,2,\dots,N\}$ is \emph{fulfilled} under a
scheduling policy $\eta$, if for every $\epsilon>0$,
\[
Prob\{\frac{d_n(t)}{t/T}>q_n-\epsilon\mbox{, for every
$n$}\}\rightarrow 1\mbox{, as } t\rightarrow\infty,
\]
where $d_n(t)$ is the number of packets delivered to client $n$ up
to time $t$.
\end{definition}

\section{Scheduling Policies}   \label{section:schedule}

Since the overall system can be viewed as a controlled Markov chain,
we have:

\begin{lemma}   \label{lemma:static policy}
For any set of clients that can be fulfilled, there exists a
stationary randomized policy that fulfills the clients, which uses a
probability distribution based only on the channel state, the set of
undelivered packets, and the number of time slots remaining in the
system (and not any events depending on past periods), according to
which it randomly chooses an undelivered packet to transmit, or
stays idle.
\end{lemma}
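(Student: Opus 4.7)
The plan is to treat the whole system as a finite-state controlled Markov chain and appeal to the standard existence theory for stationary optimal policies under long-run average criteria. First I would define the state at the start of each time slot as the triple $(c, U, r)$, where $c\in\mathcal{C}$ is the current channel state, $U\subseteq\{1,\dots,N\}$ is the set of packets still undelivered in the current period, and $r\in\{1,2,\dots,T\}$ is the number of time slots remaining in the current period (with packet arrivals reshuffling $U$ and the channel state possibly transitioning at the period boundary $r=0$). Because $\mathcal{C}$ is finite, there are finitely many subsets of the $N$ clients, and $T$ is finite, this state space is finite. The admissible actions at each state are ``transmit the packet for client $n$'' for some $n\in U$ or ``idle'', so the action space is finite as well. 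Under the hypotheses already stated in the model (Markov arrivals, Markov channel, independence across the two, known channel and ACK feedback), the evolution of this state under any policy is that of a controlled Markov chain.

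Next I would recast the fulfillment condition of Definition~1 as a set of long-run average reward constraints. Assign reward $r_n=1$ in any slot in which a packet for client $n$ is successfully delivered and $0$ otherwise; then $d_n(t)/(t/T)$ is (up to a factor $T$) the time-average of $r_n$ over the first $t$ slots. Fulfilling the clients is therefore equivalent to achieving a vector of long-run average rewards componentwise at least $q_n$. If there exists \emph{any} scheduling policy (possibly history-dependent and non-stationary) that fulfills the clients, then the corresponding expected long-run average of $r_n$ is at least $q_n$ for every $n$. Applying the classical occupation-measure characterization for finite-state constrained MDPs (e.g.\ Derman, Ross, Altman), the set of achievable long-run average reward vectors over all policies coincides with that over stationary randomized policies that depend only on the current Markov state $(c,U,r)$. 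Hence we can extract a stationary randomized policy $\eta^\ast$, defined as a distribution over actions given $(c,U,r)$, whose expected long-run average delivery rate for each client is at least $q_n$.

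Finally I would upgrade the expectation-level statement to the almost-sure statement required by Definition~1. Under the stationary randomized policy $\eta^\ast$, the state $(c,U,r)$ is a finite-state Markov chain; restricting to one recurrent class and applying the strong law for finite-state Markov chains gives almost-sure convergence of the empirical frequency of successful deliveries to its expected value under the stationary distribution. Thus $d_n(t)/(t/T)\to q_n^\ast\geq q_n$ almost surely for each $n$, which implies the probability statement in Definition~1.

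The main obstacle I anticipate is the occupation-measure step: one needs to be careful that the original, possibly history-dependent policy's achievable average rewards are indeed matched by a stationary randomized policy defined purely in terms of $(c,U,r)$. This requires verifying that the chain under $\eta^\ast$ has the appropriate recurrence structure (so that averages over recurrent classes agree with expectations) and that the finiteness of the state space makes the standard Hajnal / convex-analytic argument apply without additional ergodicity hypotheses beyond those already assumed on the arrival and channel processes.
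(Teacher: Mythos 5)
The paper states this lemma without proof, simply remarking that the overall system can be viewed as a controlled Markov chain, and your proposal supplies exactly the standard finite-state constrained-MDP / occupation-measure argument that this remark implicitly relies on, so you are taking essentially the same approach. The one point to make explicit is that, because arrivals may be correlated across periods through the arrival chain's hidden state, the full Markov state is $(a,c,U,r)$ rather than $(c,U,r)$; since the arrival and channel processes are exogenous and unaffected by the scheduling decisions, the projection onto $(c,U,r)$ still works, but your chain as defined is not quite Markov without this remark.
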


Since the computational overhead for some complex policies may be
too high for real-time applications, we consider the limited set of
\textit{priority-based policies}, which require computation only at
the beginning of each period:

\begin{definition}  \label{definition:priority-based policy}
A \emph{priority-based policy} is a scheduling policy which assigns
priorities to some of the clients, based on past history and current
state of the system, at the beginning of each period. During the
period, a packet for a client is transmitted only after all packets
for clients with higher priorities have been delivered. Packets for
clients which do not receive a priority are never transmitted. A
\emph{stationary randomized priority-based policy} is one which
chooses the priority order randomly according to a probability
distribution that depends only on the channel state and packet
arrivals at the beginning of each period. We denote by $\mathbb{P}$
and $\mathbb{P}_{rand}$ the sets of priority-based policies and
stationary randomized priority-based policies.
\end{definition}

\begin{definition}  \label{definition:feasible}
A set of clients is \emph{feasible in the set $\mathbb{P}$} (or
$\mathbb{P}_{rand}$) if there exists some scheduling policy in
$\mathbb{P}$ (or $\mathbb{P}_{rand}$) that fulfills it.
\end{definition}

Similar to Lemma \ref{lemma:static policy}, if $[q_n]$ is feasible
in the set $\mathbb{P}$, it is also feasible in the set
$\mathbb{P}_{rand}$.

\begin{definition}
We call the region in the $N$-space formed by vectors $[q_n]$ for
which the clients are feasible in $\mathbb{P}$ (or all policies), as
the \emph{feasible region under $\mathbb{P}$} (or all policies).
\end{definition}

\begin{lemma}   \label{lemma:convex feasible region}
The feasible region under the class of all policies, or
$\mathbb{P}$, are both convex sets.
\end{lemma}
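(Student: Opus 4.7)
The plan is to prove convexity by a standard time-sharing argument, using Lemma \ref{lemma:static policy} to make the construction clean. Fix two vectors $[q_n^{(1)}]$ and $[q_n^{(2)}]$ in the feasible region (under $\mathbb{P}$, or under all policies), together with fulfilling policies $\eta_1$ and $\eta_2$. For any $\alpha \in [0,1]$ I want to exhibit a policy $\eta_\alpha$ that fulfills $\alpha [q_n^{(1)}] + (1-\alpha)[q_n^{(2)}]$.

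First, I would invoke Lemma \ref{lemma:static policy} to replace $\eta_1$ and $\eta_2$ with stationary randomized policies whose action distributions depend only on the current channel state, the set of undelivered packets, and the number of time slots remaining in the current period. This is the crucial preparatory step: it strips away any cross-period memory and makes the subsequent time-sharing well-defined. Then I would define $\eta_\alpha$ as follows: at the beginning of each period, flip an independent $\text{Bernoulli}(\alpha)$ coin; if heads, use $\eta_1$ throughout the period, if tails use $\eta_2$. Because each $\eta_i$ only reads the within-period state, $\eta_\alpha$ is itself a stationary randomized policy on an augmented state space (state of the original system plus the coin). If $\eta_1,\eta_2 \in \mathbb{P}$, the coin-conditional choice still produces a priority ordering at the start of each period, so $\eta_\alpha \in \mathbb{P}$, handling that case simultaneously.

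Next I would argue the throughput claim. Under $\eta_\alpha$, the joint process of (channel state, arrivals, coin toss, delivery outcomes) forms a positive recurrent Markov chain with unique stationary distribution. Conditioning on the coin, the expected per-period delivery count for client $n$ equals $\alpha \cdot E_{\eta_1}[\text{deliveries for }n] + (1-\alpha)\cdot E_{\eta_2}[\text{deliveries for }n]$. Since $\eta_i$ fulfills $[q_n^{(i)}]$, each conditional expectation is at least $q_n^{(i)}-\epsilon/2$ in the long run (taking an appropriate law-of-large-numbers average), so the unconditional long-run rate is at least $\alpha q_n^{(1)} + (1-\alpha)q_n^{(2)} - \epsilon$. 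Applying the strong law to the empirical averages $d_n(t)/(t/T)$ then yields convergence in probability to a value at least $\alpha q_n^{(1)} + (1-\alpha)q_n^{(2)}-\epsilon$, which is precisely the definition of fulfillment.

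The main obstacle I anticipate is purely bookkeeping: ensuring that the coin-flip construction gives a legitimate element of the class under consideration, and justifying the interchange of long-run averaging with the conditioning on the coin. The reduction to stationary randomized policies via Lemma \ref{lemma:static policy} is what makes this painless — without it, policies with arbitrary dependence on past periods could interact with the time-sharing in an unpleasant way, and one would have to simulate both $\eta_1$ and $\eta_2$ on parallel virtual trajectories to maintain their internal state, which is unnecessary here.
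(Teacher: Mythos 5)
Your proposal is correct and is essentially the paper's own argument: reduce to stationary randomized policies and time-share between them by an independent coin flip at the start of each period, so that the long-run delivery rate is the convex combination. The only detail the paper adds that you omit is the (trivial) observation that $\alpha q_n^{(1)}+(1-\alpha)q_n^{(2)}>0$, needed so the combined vector is a legitimate point of the feasible region.
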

\begin{proof}
Let $[q_n]$ and $[q'_n]$ be two vectors in the feasible region under
$\mathbb{P}$, and thus also feasible in $\mathbb{P}_{rand}$. Let
$\eta$ and $\eta'$ be policies in $\mathbb{P}_{rand}$ that fulfill
the two vectors, respectively. Then, the policy in $\mathbb{P}$ that
randomly picks one of the two policies, with $\eta$ being chosen
with probability $\alpha$, at the beginning of each period, fulfills
the vector $[\alpha q_n + (1-\alpha)q'_n]$. Further, since $q_n$ and
$q'_n$ are both larger than 0 for each $n$, $\alpha
q_n+(1-\alpha)q'_n>0$ for all $n$. Thus, the vector $[\alpha q_n +
(1-\alpha) q'_n]$ also falls in the feasible region under
$\mathbb{P}$. A similar proof holds for the class of all policies.
\end{proof}

Note that if $[q_n]$ is feasible in $\mathbb{P}$, then so is
$[q'_n]$, where $0<q_n'\leq q_n$.

\begin{definition}  \label{definition:strictly_feasible}
$[q_n]$ is \emph{strictly feasible in $\mathbb{P}$} (or the class of
all policies) if there exists some $\alpha\in(0,1)$ such that
$[q_n/\alpha]$ is feasible in $\mathbb{P}$ (or the class of all
policies).\footnote{Equivalently, $[q_n]$ is an interior point of
the feasible region under $\mathbb{P}$ (or the class of all
policies).}
\end{definition}

\begin{definition}  \label{definition:feasibility_optimal}
A scheduling policy $\eta$ is \emph{feasibility optimal among
$\mathbb{P}$} (or the class of all policies) if it fulfills every
set of clients that is strictly feasible in $\mathbb{P}$ (or the
class of all policies).
\end{definition}

In the rest of the paper, unless otherwise specified, the default is
the set of all policies.

\subsection{The Static Channel Case} \label{section:static channel}

In previous work \cite{IHH09MobiHoc}, the problem of admission
control and feasibility optimal scheduling has been addressed for
the case where the channel state is static, and all clients require
the same delay bounds, i.e. $|\mathcal{C}|=1$ and $\tau_n
\equiv\tau$. In the special case, we will use $p_n$ instead of
$p_{c,n}$ since the channel state is static, and $\tau$ instead of
$\tau_n$.

Two \emph{largest debt first} scheduling polices were proved to be
feasibility optimal, where the AP, based on the past history,
calculates a debt for each client. In each period, the AP sorts all
clients according to their debts, and schedules a packet for client
$n$ only after all packets for clients with larger debts have been
delivered. The first policy, the \textit{largest time-based debt
first policy}, uses the \textit{time-based debt} for client $n$ at
time slot $t$, defined as $\frac{t}{T}w_n$ minus the number of time
slots that the AP has spent on transmitting packets for client $n$
up to time slot $t$. The other policy, the \textit{largest
weighted-delivery debt first policy}, uses the
\textit{weighted-delivery debt} for client $n$ at time slot $t$,
defined as $\frac{\frac{t}{T}q_n-d_n(t)}{p_n}$, where $d_n(t)$ is
the number of delivered packets for client $n$ up to time slot $t$.

As for admission control, the following lemma was proved in
\cite{IHH09MobiHoc}:

\begin{lemma}   \label{lemma:workload}
A set of clients is fulfilled if and only if the long-term average
number of time slots that the AP spends on transmitting packets for
client $n$ per period is at least $w_n=\frac{q_n}{p_n}$ for each
$n$.
\end{lemma}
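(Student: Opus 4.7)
The plan is to reduce both directions of the equivalence to the strong law of large numbers applied to the sequence of transmission outcomes on the link to client $n$. Since the channel state is static in this subsection, each transmission the AP makes for client $n$ is an independent Bernoulli$(p_n)$ trial. Let $N_n(t)$ denote the number of slots the AP has devoted to client $n$ up to time $t$; because a packet that has been successfully delivered is never retransmitted in its own period, I can write $d_n(t) = \sum_{i=1}^{N_n(t)} Y_{n,i}$, where $(Y_{n,i})_{i\ge 1}$ are i.i.d.\ Bernoulli$(p_n)$. The lemma will follow once I establish the key SLLN-type claim that $d_n(t)/N_n(t) \to p_n$ almost surely on the event $\{N_n(t)\to\infty\}$.

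For the only-if direction, fix any fulfilling policy. Definition \ref{definition:fulfill} says that for every $\epsilon>0$ the event $\{d_n(t)/(t/T) > q_n-\epsilon\}$ has probability tending to $1$, which forces $d_n(t)\to\infty$ and hence $N_n(t)\to\infty$. The SLLN step then gives
\[
\liminf_{t\to\infty}\frac{N_n(t)}{t/T} \;=\; \liminf_{t\to\infty}\frac{d_n(t)/(t/T)}{d_n(t)/N_n(t)} \;\ge\; \frac{q_n-\epsilon}{p_n},
\]
and letting $\epsilon\downarrow 0$ yields the lower bound $w_n=q_n/p_n$. For the if direction, suppose a policy satisfies $\liminf_{t\to\infty} N_n(t)/(t/T) \ge w_n$ for each $n$. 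Then $N_n(t)\to\infty$ and running the computation in reverse gives
\[
\liminf_{t\to\infty}\frac{d_n(t)}{t/T} \;=\; \liminf_{t\to\infty}\frac{d_n(t)}{N_n(t)}\cdot\frac{N_n(t)}{t/T} \;\ge\; p_n\cdot w_n \;=\; q_n,
\]
which is precisely the condition for fulfillment.

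The main obstacle is making the SLLN step rigorous, because $N_n(t)$ is a random index that depends on past transmission outcomes and on the (possibly randomized) scheduling policy. The cleanest way around this is to pre-draw an i.i.d.\ Bernoulli$(p_n)$ sequence $(Y_{n,i})_{i\ge 1}$ for each client and reveal $Y_{n,i}$ only at the moment of the AP's $i$-th transmission to client $n$. Then the scheduling decision at any slot is measurable with respect to already-revealed outcomes, the unrevealed suffix remains i.i.d.\ and independent of the scheduler, and the classical SLLN for i.i.d.\ sequences applies at the random index $N_n(t)$. Since at most $\tau$ transmissions can occur per period, $N_n(t)/(t/T)$ is bounded, so all the $\liminf$'s above are well behaved and the two bounds combine to give the stated iff.
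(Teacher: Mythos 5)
The paper does not prove this lemma at all: it is quoted verbatim from \cite{IHH09MobiHoc} as a result of the static-channel predecessor work, so there is no in-paper proof to compare against. Your argument is the standard one that underlies the cited result: each slot the AP devotes to client $n$ is an independent Bernoulli$(p_n)$ attempt at an as-yet-undelivered packet, so $d_n(t)$ is exactly the number of successes among the $N_n(t)$ attempts, and the random-index SLLN (legitimized by your pre-drawing construction, which keeps the unrevealed outcomes i.i.d.\ and independent of the scheduler's adapted decisions) converts the slot budget into the delivery count and back. The one place your write-up is looser than the paper's own definitions is the mode of convergence: Definition \ref{definition:fulfill} only asserts $Prob\{d_n(t)/(t/T)>q_n-\epsilon\}\to 1$, i.e.\ convergence in probability, so in the only-if direction you cannot immediately write a pathwise bound on $\liminf_t N_n(t)/(t/T)$ as your display does; what you actually obtain is the in-probability analogue. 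This is easily repaired and does not change the substance: since $d_n(t)/N_n(t)\to p_n$ almost surely (hence in probability, once you note that fulfillment forces $d_n(t)\to\infty$ a.s.\ by monotonicity of $d_n$), dividing the in-probability lower bound on $d_n(t)/(t/T)$ by this ratio gives $Prob\{N_n(t)/(t/T)>w_n-\epsilon'\}\to 1$, and the converse direction is symmetric. With the conclusion of the lemma read in that (in-probability) sense --- which is the only sensible reading of its informal phrasing --- your proof is sound, modulo the stated and reasonable convention that the AP never retransmits an already-delivered packet.
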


Further, since expired packets are dropped, the number of packets in
the system is bounded. Thus, there may be some time slots where the
AP may have delivered all packets in the system, and is therefore
forced to stay idle. For any subset $S$ of $\{1,2,\dots, N\}$,
define $I_S$ to be the minimum number of time slots that the AP is
idle in a period for any scheduling policy, given that the AP can
only transmit packets for the subset $S$ of clients. A necessary and
sufficient condition for strict feasibility is proved:

\begin{theorem} \label{theorem:necessary and sufficient}
A set of clients is strictly feasible if and only if $\sum_{n\in
S}w_n< T-E[I_S]$, for all $S\subseteq \{1,2\dots,N\}$.
\end{theorem}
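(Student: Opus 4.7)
I would first reduce the statement to a question about long-run expected workload per period. By Lemma~\ref{lemma:workload}, fulfilling $[q_n]$ is equivalent to producing a policy whose expected per-period transmission time on client $n$ is at least $w_n=q_n/p_n$. So I would establish the following equivalent claim: the set of achievable expected workload vectors is the polytope $\mathcal{P}=\{v\ge 0:\sum_{n\in S}v_n\le T-E[I_S]\ \forall S\}$, and strict feasibility corresponds to the interior of $\mathcal{P}$.

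\emph{Necessity.} Suppose $[q_n]$ is strictly feasible, so $[q_n/\alpha]$ is feasible for some $\alpha\in(0,1)$. By Lemma~\ref{lemma:static policy} pick a stationary randomized policy $\eta$ fulfilling $[q_n/\alpha]$; by Lemma~\ref{lemma:workload} it spends at least $\sum_{n\in S}w_n/\alpha$ slots per period on clients in $S$ on average. For the upper bound, project $\eta$ to the sub-policy $\eta_S$ that transmits exactly when $\eta$ serves a client in $S$ and is idle otherwise. Then $\eta_S$ is a valid policy restricted to $S$, so its per-period idle time is at least $I_S$ sample-pathwise, which makes $\eta$'s time on $S$-clients at most $T-E[I_S]$ in expectation. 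Combining the two bounds and using $\alpha<1$ gives $\sum_{n\in S}w_n<T-E[I_S]$.

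\emph{Sufficiency.} Given $\sum_{n\in S}w_n<T-E[I_S]$ for every $S$, I would construct a stationary randomized priority-based policy whose expected workload dominates $[w_n]$ coordinatewise, and then invoke Lemma~\ref{lemma:workload}. The key structural fact I would prove is that the set function $f(S)=T-E[I_S]$ is submodular; granting this, $\mathcal{P}$ is a polymatroid, and its extreme points are obtained by the greedy allocation along permutations of $\{1,\dots,N\}$. I would then identify, for each permutation $\pi$, the priority-based policy that always serves the highest-priority client (under $\pi$) with an undelivered packet; its expected per-client workload $v^\pi$ is exactly that greedy extreme point. Since $[w_n]$ lies strictly inside $\mathcal{P}$, I can write $[w_n]=\sum_\pi\lambda_\pi v^\pi$ as a convex combination of these extreme points, so the randomized priority-based policy that selects $\pi$ at the start of each period with probability $\lambda_\pi$ achieves the desired expected workload.

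\emph{Main obstacle.} The hard part is the sufficiency, and specifically the two polymatroid facts: submodularity of $f$, and the identification of greedy priority-based policies with the extreme points of $\mathcal{P}$. Submodularity I would attack by a pathwise coupling argument, running the optimal $S$-restricted and $S\cup\{n\}$-restricted schedulers against a common arrival sequence and comparing the marginal decrease in idle time, but some care is required because $I_S$ is defined by a \emph{minimum} over policies, not by a single fixed one. The extreme-point identification requires verifying that a greedy priority scheduler saturates $\sum_{n\in S_k}v_n=f(S_k)$ on every prefix $S_k$ of the permutation, which again rests on the observation that whenever a higher-priority prefix has no undelivered packet, neither can any sub-policy restricted to that prefix.
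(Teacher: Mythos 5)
The paper does not actually prove this theorem: it is quoted verbatim from the prior work \cite{IHH09MobiHoc}, so there is no in-paper proof to compare against. Judged on its own, your outline is sound and is essentially the route taken in that reference (necessity via the forced-idle-time bound, sufficiency by randomizing over strict priority orderings). Two remarks on the obstacles you flag. First, the worry that $I_S$ is a minimum over policies dissolves in this static-channel, common-deadline setting: coupling the per-client geometric attempt counts $\gamma_n$, every work-conserving policy restricted to $S$ is busy for exactly $\min(\tau,\sum_{n\in S'}\gamma_n)$ slots (where $S'$ is the arrived subset of $S$), since interleaving does not change the total number of attempts needed; hence $I_S=(T-\tau)+(\tau-\sum_{n\in S'}\gamma_n)^+$ pathwise for all of them, and any non-work-conserving policy idles more. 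Submodularity of $f(S)=T-E[I_S]=E[\min(\tau,\sum_{n\in S'}\gamma_n)]$ then follows without any coupling of two schedulers: pathwise it is a concave nondecreasing function of the modular set function $\sum_{n\in S'}\gamma_n$, hence submodular, and expectation preserves this. The same identity gives the extreme-point claim, since under priority order $\pi$ the expected time spent on the prefix $S_k$ is exactly $f(S_k)$. Second, your sufficiency step contains a small imprecision: an interior point of the polymatroid need not be \emph{equal} to a convex combination of the full-permutation greedy vectors (those are the vertices of the base polytope, not of $\mathcal{P}$ itself); it is only \emph{dominated} coordinatewise by such a combination. That is enough here because Lemma~\ref{lemma:workload} only requires the achieved workload to be at least $w_n$, but the statement should be phrased as domination. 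Finally, note the condition as written fails vacuously for $S=\emptyset$ (it reads $0<0$), so the quantifier should be over nonempty $S$.
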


\section{Time-Varying Channels}
\label{subsection:extension for time varying channels}

We now discuss how to extend the aforementioned policies to provide
QoS for time-varying channels. One intuitive approach is to decouple
the channel states. The AP assigns a timely-throughput requirement
$q_{c,n}$ for each channel state $c$ and client $n$, with
$\sum_{c\in \mathcal{C}} f_cq_{c,n}\geq q_n$. Also, for each channel
state $c$, the assigned throughput requirements must be strictly
feasible under that channel state, that is, $\sum_{n\in
S}\frac{q_{c,n}}{p_{c,n}}< T - E[I_{c,S}]$ for all $S\subseteq
\{1,2,\dots,N\}$ , where $I_{c,S}$ is the minimal number of time
slots that the AP is forced to stay idle in a period under channel
state $c$ for any scheduling policy, given that the AP only
transmits packets for the subset $S$ of clients. More formally, we
therefore seek a matrix $Q=[q_{c,n}]$ that solves the following
linear programming problem:
\begin{align*}
&\mbox{Max }\mbox{$\sum_{n=1}^N\sum_{c\in\mathcal{C}}$}f_cq_{c,n}\\
\mbox{s.t. } & \mbox{$\sum_{c\in\mathcal{C}}$} f_cq_{c,n} \geq q_n, \forall n \\
&\mbox{$\sum_{n\in S}$} \frac{q_{c,n}}{p_{c,n}} < T - E[I_{c,S}],
\forall c, \forall S\subseteq\{1,2,\cdots,N\}.
\end{align*}
After obtaining the matrix $Q$, we can modify the two largest debt
first policies to deal with time-varying channel conditions. Let
$s_c(t)$ be the number of time slots up to time slot $t$ that the
channel state has been $c$, and assume that the channel state at
time slot $t$ is $c$. In the largest time-based debt first policy,
we define the time-based debt for client $n$ under channel state $c$
as $\frac{s_c(t)}{T}\frac{q_{c,n}}{p_{c,n}}$ minus the number of
time slots that the AP has spent on transmitting packets for client
$n$ under channel state $c$ up to time slot $t$. In the largest
weighted-delivery debt first policy, we define the weighted-delivery
debt for client $n$ under channel state $c$ as
$\frac{\frac{s_c(t)}{T}q_{c,n}-d_{c,n}(t)}{p_{c,n}}$, where
$d_{c,n}(t)$ is the number of delivered packets for client $n$ under
channel state $c$. Obviously, these two modified largest debt first
policies are feasibility optimal.

While this extension offers feasibility optimality, the above linear
program involves exponentially many constraints. Further, it also
requires the knowledge of the distribution $[p_{c,n}]$ of channel
states. In many scenarios, such as those with mobile nodes, this
knowledge may not be available. This motivates us, in the following
sections, to describe a more general class of feasibility optimality
policies, and derive an on-line scheduling policy that is
feasibility optimal for the time-varying channel conditions.

\section{A Sufficient Condition for Feasibility Optimality}
\label{section:sufficient condition}

We now describe a more general class of policies that is feasibility
optimal. We start by extending the concept of ``debt''.

\begin{definition}  \label{definition:representing debt}
A variable $r_n(k)$, whose value is determined by the past history
of the client $n$ up to the $k^{th}$ period, or time slot $kT$, is
called a \emph{pseudo-debt} if:

\begin{enumerate}
\item $r_n(0) = 0$, for all $n$.

\item At the beginning of each period, $r_n(k)$ increases by a
constant strictly positive number $z_n=z_n(q_n)$, which is an
increasing linear function of $q_n$.

\item $r_n(k+1)=r_n(k)+z_n(q_n)-\mu_n(k)$, where $\mu_n(k)$ is
a non-negative and bounded random variable whose value is determined
by the behavior of client $n$. Further, $\mu_n(k)=0$ if the AP does
not transmit any packet for client $n$.

\item The set of clients is fulfilled if and only if
$Prob\{\frac{r_n(k)}{k} < \varepsilon\}\rightarrow 1$, as
$k\rightarrow\infty$, for all $n$ and all $\varepsilon>0$.
\end{enumerate}
\end{definition}

In the following example, we illustrate that both the time-based
debt and the weighted-delivery debt are pseudo-debts under a static
channel model.

\begin{example} \label{example:representing debt}
At the beginning of each period, the time-based debt $r^{(1)}_n(k)$
increases by $w_n=\frac{q_n}{p_n}$, and decreases by the number of
time slots that the AP has transmitted packets for client $n$ during
the period. Lemma \ref{lemma:workload} shows that condition (4) is
satisfied.

Similarly, $r^{(2)}_n(k)$, the weighted-delivery debt is also a
special case. It increases by $\frac{q_n}{p_n}$ at the beginning of
each period, and decreases by $\frac{1}{p_n}$ if a packet is
delivered for client $n$ during that period, and 0 otherwise. It
satisfies condition (4) by definition.   $\Box$
\end{example}

We can also define the \emph{feasible region for debt in
$\mathbb{P}$} (or in the set of all policies) as the set of $[z_n]$
such that the corresponding $[q_n]$ is feasible in $\mathbb{P}$  (or
in the set of all policies). Since $z_n$ is a linear function of
$q_n$ and the feasible region for $[q_n]$ is a convex set (Lemma
\ref{lemma:convex feasible region}), the feasible region for $[z_n]$
is also a convex set.

Using the concept of pseudo-debt, we prove a sufficient condition
for feasibility optimality. The proof resembles one used by Neely
\cite{MN08}, though in a different context, and is based on:

\begin{theorem}[Lyapunov Drift Theorem]\label{theorem:Lyapunov drift}
Let $L(t)$ be a non-negative Lyapunov function. Suppose there exists
some constant $B>0$ and non-negative function $f(t)$ adapted to the
past history of the system such that:
\begin{align*}
&E\{L(t+1)-L(t)|\mbox{history up to time $t$}\} \leq B-\epsilon
f(t),
\end{align*}
for all $t$, then:
$
\limsup_{t\rightarrow\infty}\frac{1}{t}\sum_{i=0}^t E\{f(i)\}\leq
B/\epsilon.
$
$\Box$
\end{theorem}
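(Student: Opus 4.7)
The plan is to carry out a standard Lyapunov/Foster-type telescoping argument: take unconditional expectations of the drift hypothesis, sum over a time window of length $T$, and exploit non-negativity of $L$ to kill the endpoint term.

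First I would take the unconditional expectation of the hypothesized bound. By the tower property of conditional expectation, for every $t \geq 0$,
\[
E\{L(t+1)\} - E\{L(t)\} \;\leq\; B - \epsilon\, E\{f(t)\}.
\]

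Next I would sum this inequality over $t = 0, 1, \dots, T$ so that the left side telescopes to $E\{L(T+1)\} - E\{L(0)\}$. After rearrangement,
\[
\epsilon \sum_{t=0}^{T} E\{f(t)\} \;\leq\; B(T+1) + E\{L(0)\} - E\{L(T+1)\}.
\]
Dropping the non-positive term $-E\{L(T+1)\}$ using $L \geq 0$, dividing by $\epsilon T$, and taking $\limsup$ as $T \to \infty$ yields
\[
\limsup_{T\to\infty} \frac{1}{T}\sum_{t=0}^{T} E\{f(t)\} \;\leq\; \frac{B}{\epsilon},
\]
since the residual $E\{L(0)\}/(\epsilon T)$ vanishes in the limit.

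The argument is essentially mechanical once the right invariant — the expected Lyapunov function value — is identified, and I do not anticipate any genuine obstacle. The only bookkeeping subtlety is ensuring $E\{L(0)\}$ is finite so that the residual truly vanishes; this is normally guaranteed by an initial condition such as $L(0) = 0$ or by a boundedness assumption on the initial state. Note that the adaptedness of $f(t)$ is not actually needed for this deduction — it will only matter when the theorem is invoked to drive individual pseudo-debts $r_n(k)$ to zero in Cesàro average, in the subsequent feasibility-optimality proof.
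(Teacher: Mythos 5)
Your telescoping argument is correct and is the standard proof of this drift lemma; the paper itself states the theorem without proof (it is quoted as a known result from Neely's work), so there is no in-paper argument to compare against, but yours is exactly the argument that result rests on. Your closing caveat about $E\{L(0)\}<\infty$ is the right one to flag, and it is harmless here since the paper applies the theorem with $L(0)=\frac{1}{2}\sum_n r_n(0)^2=0$.
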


\begin{theorem} \label{theorem:sufficient}
Let $r_n(k)$ be a pseudo-debt.
\begin{enumerate}
\item A policy that maximizes the \emph{payoff function}
\begin{equation}\label{equation:payoff function}
\sum_{n=1}^N E\{r_n(k)^+\mu_{n}(k)|c_k, S_k, [r_m(k)]\}
\end{equation}
at the beginning of each period is feasibility optimal, where $c_k$
denotes the channel state in the $k^{th}$ period, and $S_k$ is the
subset of clients whose packets arrive at the AP at the beginning of
the $k^{th}$ period.
\item \label{item:two}A priority-based policy that maximizes (\ref{equation:payoff function})
over all policies in $\mathbb{P}$ is feasibility optimal in
$\mathbb{P}$.
\end{enumerate}
\end{theorem}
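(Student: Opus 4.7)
The plan is a Neely-style Lyapunov drift argument applied period-by-period to the pseudo-debt process. Take the quadratic Lyapunov function $L(k) := \tfrac{1}{2}\sum_{n=1}^N (r_n(k)^+)^2$. Using $r_n(k+1) \le r_n(k)^+ + z_n - \mu_n(k)$ and the elementary inequality $((x^+ + y)^+)^2 - (x^+)^2 \le 2 x^+ y + y^2$, together with the boundedness of the fixed $z_n$ and of $\mu_n(k)$ guaranteed by items (2)--(3) of Definition \ref{definition:representing debt}, I derive a one-period drift bound of the form
\begin{equation*}
E[L(k+1)-L(k)\mid\mathcal{H}_k] \le B + \sum_{n} z_n r_n(k)^+ - E\Big[\sum_{n} r_n(k)^+\mu_n(k) \,\Big|\, \mathcal{H}_k\Big],
\end{equation*}
for some constant $B>0$, where $\mathcal{H}_k$ denotes the history at the start of period $k$ and therefore already includes $c_k$, $S_k$, and $[r_m(k)]$.

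The next step constructs a comparator policy from strict feasibility. By Definition \ref{definition:strictly_feasible} there exists $\alpha\in(0,1)$ with $[q_n/\alpha]$ feasible, and Lemma \ref{lemma:static policy} produces a stationary randomized policy $\eta^*$ fulfilling $[q_n/\alpha]$. Since $z_n$ is linear in $q_n$ with $z_n(0)=0$, the pseudo-debt for the scaled problem has per-period increment $z_n/\alpha$, so item (4) of Definition \ref{definition:representing debt} together with the boundedness of $\mu_n^{\eta^*}$ forces the stationary-mean decrement to satisfy $E[\mu_n^{\eta^*}] = z_n/\alpha$. By construction the max-payoff policy attains the maximum of (\ref{equation:payoff function}) over all period-scheduling strategies for every realization of $(c_k, S_k, [r_m(k)])$, and $\eta^*$'s in-period decisions depend only on $(c_k,S_k)$ and within-period state rather than on $[r_m(k)]$; averaging the pointwise inequality over $(c_k,S_k)$ conditional on $\mathcal{H}_k$ yields
\begin{equation*}
E\Big[\sum_{n} r_n(k)^+\mu_n(k)\,\Big|\,\mathcal{H}_k\Big] \ge \sum_{n} r_n(k)^+ \cdot \frac{z_n}{\alpha}.
\end{equation*}

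Substituting this into the drift bound leaves $E[L(k+1)-L(k)\mid\mathcal{H}_k] \le B - \epsilon\sum_n r_n(k)^+$ with $\epsilon := (\tfrac{1}{\alpha}-1)\min_n z_n > 0$. Theorem \ref{theorem:Lyapunov drift} with $f(k):=\sum_n r_n(k)^+$ then delivers $\limsup_{K\to\infty}\tfrac{1}{K}\sum_{k=0}^K E\!\left[\sum_n r_n(k)^+\right] \le B/\epsilon$, and a routine Markov-inequality argument passes from this Cesaro bound to $\Pr\{r_n(k)/k\ge\varepsilon\}\to 0$ for every $n$ and $\varepsilon>0$, which is fulfillment by item (4) of Definition \ref{definition:representing debt}. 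Part (\ref{item:two}) follows identically, with $\eta^*$ drawn from $\mathbb{P}_{rand}$ via the analog of Lemma \ref{lemma:static policy} remarked after Definition \ref{definition:feasible}, and the pointwise comparison restricted to $\mathbb{P}$. The main obstacle is the middle paragraph: transferring the unconditional stationary identity $E[\mu_n^{\eta^*}]=z_n/\alpha$ into the conditional form needed for the drift inequality, since the arrival/channel process is only Markov rather than i.i.d.\ across periods; this can be resolved either by augmenting the Markov state with $(c_k,S_k)$ so that $\eta^*$'s per-period action becomes a genuine function of state, or by replacing the one-period drift with a multi-period drift long enough to absorb the mixing time of the underlying chain.
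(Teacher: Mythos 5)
Your proposal is correct and follows essentially the same route as the paper: a quadratic Lyapunov function on the pseudo-debts, a one-period drift bound with the quadratic term absorbed into a constant $B$, a stationary randomized comparator policy obtained from strict feasibility via Lemma \ref{lemma:static policy}, a pointwise comparison against the payoff-maximizing policy, and the Lyapunov Drift Theorem plus a bounded-increment argument to conclude. The only differences are cosmetic --- the paper uses $L(k)=\frac12\sum_n r_n(k)^2$ together with the convention $\mu_n(k)=0$ when $r_n(k)\le 0$, and manufactures its uniform margin $(1-\alpha)z_{min}$ by convexly combining $[z_n/\alpha]$ with the constant vector $[z_{min}]$ rather than using $[z_n/\alpha]$ directly --- and the conditioning obstacle you flag at the end is one the paper simply passes over by taking the outer expectation with respect to the stationary channel/arrival distribution.
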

\begin{proof}
We present the proof for $\mathbb{P}$ only. A similar proof works
for the class of all policies too. Define $
L(k)=\frac{1}{2}\sum_{n=1}^Nr_n(k)^2.$ Since $r_n(k+1) =
r_n(k)+z_n-\mu_n(k)$,
\begin{align*}
&\Delta(L(k)):=E\{L(k+1)-L(k)|[r_m(k)]\}\\
=&E\{\frac{1}{2}\sum_{n=1}^Nr_n(k+1)^2-\frac{1}{2}\sum_{n=1}^Nr_n(k)^2|[r_m(k)]\}\\
=&E\{\sum_{n=1}^Nr_n(k)[z_n-\mu_n(k)]+\frac{1}{2}\sum_{n=1}^N[z_n-\mu_n(k)]^2|[r_m(k)]\}.
\end{align*}

Define
$B(k):=E\{\frac{1}{2}\sum_{n=1}^N[z_n-\mu_n(k)]^2|[r_m(k)]\}$. Then
$B(k)\leq B$, for all $k$, for some $B$. Hence for any policy in
$\mathbb{P}$:
\begin{equation}    \label{equation:Lyapunov drift}
\Delta(L(k))\leq E\{\sum_{n=1}^Nr_n(k)[z_n-\mu_n(k)]|[r_m(k)]\}+B.
\end{equation}

Suppose $[q_n]$ is strictly feasible in $\mathbb{P}$. The vector
$[z_n]$ is thus an interior point of the feasible region (for debt)
under $\mathbb{P}$, and there therefore exists some $\alpha\in(0,1)$
such that $[z_n/\alpha]$ is also in the feasible region under
$\mathbb{P}$. Let $z_{min}=\min\{z_1,z_2,\dots,z_N\}$. The
$N$-dimensional vector $[z_{min}]$ whose elements are all $z_{min}$,
falls in the feasible region under $\mathbb{P}$. Since the feasible
region under $\mathbb{P}$ is a convex set, the vector
$\alpha[z_n/\alpha]+(1-\alpha)[z_{min}]=[z_n+(1-\alpha)z_{min}]$ is
also in the feasible region under $\mathbb{P}$.

By Lemma \ref{lemma:static policy}, there exists a stationary
randomized policy $\eta'$ in $\mathbb{P}$ that fulfills the set of
clients with timely-throughput bounds for the vector
$[z_n+(1-\alpha)z_{min}]$. Let $\mu'_n(k)$ be the decrease in the
pseudo-debt for client $n$ under $\eta'$ during the period. Then, we
have:
\begin{align*}
E\{\mu'_n(k)|[r_m(k)]\}&=E\{E\{\mu'_n(k)|c_k,S_k,[r_m(k)]\}\}\\
&\geq z_n+(1-\alpha)z_{min}.
\end{align*}
Above, the outer expectation in the RHS is taken over channel states
and the vectors of packet arrivals.

Let $\eta$ be a policy that maximizes the payoff function
(\ref{equation:payoff function}),
for all $k$, among all policies in $\mathbb{P}$. Then defining
$\mu_n(k)$ and $r_n(k)$ as the decrease resulting from policy $\eta$
and the pseudo-debt, we have:
\begin{align*}
&\mbox{$\sum_{n=1}^N$} E\{r_n(k)^+\mu_{n}(k)|c_k,S_k,
[r_m(k)]\}\\\geq&\mbox{$\sum_{n=1}^N$}
E\{r_n(k)^+\mu'_{n}(k)|c_k,S_k, [r_m(k)]\}.
\end{align*}

We can assume without loss of generality that the policy does not
work on any client $n$ with $r_n(k)\leq 0$, that is, $\mu_n(k)=0$ if
$r_n(k)\leq 0$.\footnote{Since a policy cannot lose its feasibility
optimality by doing more work, this assumption is not restrictive.}
From (\ref{equation:Lyapunov drift}), we obtain:
\begin{align*}
\Delta(L(k))&\leq E\{\mbox{$\sum_{n=1}^N$}r_n(k)^+[z_n-\mu_n(k)]|[r_m(k)]\}+B\\
&\leq E\{\mbox{$\sum_{n=1}^N$}r_n(k)^+[z_n-\mu'_n(k)]|[r_m(k)]\}+B\\
&\leq -\mbox{$\sum_{n=1}^N$}r_n(k)^+(1-\alpha)z_{min}+B.
\end{align*}
Let $\epsilon:=(1-\alpha)z_{min}$. By Theorem \ref{theorem:Lyapunov
drift},
\begin{equation}    \label{equation:finally bounded}
\mbox{$\limsup_{k\rightarrow\infty}\frac{1}{k}\sum_{i=0}^{k}E\{\sum_{n=1}^Nr_n(k)^+\}\leq
B/\epsilon$}.
\end{equation}

Finally, since $z_n$ is a constant and $\mu_n(k)$ is a bounded
function, $|r_n(k+1)-r_n(k)|$ is bounded, which implies that
$|\sum_{n=1}^Nr_n(k+1)^+-\sum_{n=1}^Nr_n(k)^+|$ is also bounded for
all $k$. Thus, (\ref{equation:finally bounded}) implies that
$\frac{1}{k}E\{\sum_{n=1}^Nr_n(k)^+\}\rightarrow 0$ as
$k\rightarrow\infty$, as shown in Lemma \ref{lemma:bounded
convergence} below. This shows that $\frac{{r_n(k)}^+}{k}$ converges
to 0 in probability for all $n$. Hence, $\eta$ is feasibility
optimal in $\mathbb{P}$.
\end{proof}

\begin{lemma}   \label{lemma:bounded convergence}
Let $f(t)$ be a non-negative function such that $|f(t+1)-f(t)|\leq
M$, for some $M>0$, for all $t$. If
$\limsup_{t\rightarrow\infty}\frac{1}{t}\sum_{i=0}^tf(i)\leq
B/\epsilon,$ then $\lim_{t\rightarrow\infty}\frac{1}{t}f(t)=0.$
\end{lemma}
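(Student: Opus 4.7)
\bigskip

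The plan is to argue by contradiction: if $\frac{f(t)}{t}$ did not tend to $0$, then the Lipschitz-type bound $|f(t+1)-f(t)|\le M$ would force $f$ to stay large on an interval whose length is proportional to $f(t)$ itself, and this would drive the Cesaro average $\frac{1}{t}\sum_{i=0}^{t} f(i)$ to infinity, violating the assumed bound $B/\epsilon$.

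Concretely, I would first assume for contradiction that $\limsup_{t\to\infty} \frac{f(t)}{t} = \delta > 0$. Pick a subsequence $t_k\to\infty$ along which $f(t_k)\ge \frac{\delta}{2} t_k$. The key observation, coming from the one-step change bound, is that $f(s)\ge f(t_k) - M(s-t_k)$ for $s\ge t_k$, so on the window $[t_k,\, t_k + \frac{\delta}{4M} t_k]$ we have $f(s)\ge \frac{\delta}{4} t_k$. Summing over this window gives
\[
\sum_{i=0}^{T_k} f(i) \;\ge\; \Bigl(\tfrac{\delta}{4M}\, t_k\Bigr)\Bigl(\tfrac{\delta}{4}\, t_k\Bigr) \;=\; \tfrac{\delta^2}{16M}\, t_k^{\,2},
\]
where $T_k := t_k + \lfloor \frac{\delta}{4M} t_k\rfloor = \Theta(t_k)$. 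Dividing by $T_k$ yields $\frac{1}{T_k}\sum_{i=0}^{T_k} f(i)\to\infty$, contradicting the hypothesis that this Cesaro average is eventually bounded above by $B/\epsilon$. Hence $\frac{f(t)}{t}\to 0$.

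The only subtle point is the bookkeeping around the window length and making sure the lower bound on $f$ on that window is large enough that multiplying by the window length produces a quadratic-in-$t_k$ lower sum; choosing the window to have length $\frac{\delta}{4M} t_k$ (so that $f$ decays by at most $\frac{\delta}{4} t_k$ across it) accomplishes this cleanly. No further complication arises from the initial segment $i < t_k$ because $f\ge 0$, so dropping those terms only decreases the sum. I do not anticipate a real obstacle; the argument is purely deterministic and elementary.
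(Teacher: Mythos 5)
Your proof is correct and is essentially the paper's own argument: both exploit the bound $|f(t+1)-f(t)|\le M$ to show that whenever $f(t_k)\gtrsim \delta t_k$, the function stays of order $\delta t_k$ on a window of length $\Theta(t_k)$, yielding a sum of order $t_k^2$ and hence an unbounded Ces\`aro average. The only cosmetic difference is that you extend the window forward from $t_k$ while the paper walks backward from $t$; the bookkeeping is equivalent in either direction.
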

\begin{proof}
We prove by contradiction. Suppose
$\limsup_{t\rightarrow\infty}\frac{1}{t}f(t)>\delta$, for some
$\delta>0$. Thus, $f(t)>t\delta$ infinitely often. Suppose
$f(t)>t\delta$ for some $t$. Since $|f(t)-f(t-1)|<M$, we have
$f(t-1)>t\delta-M$. Similarly, $f(t-2)>t\delta-2M,$
$f(t-3)>t\delta-3M,\dots, f(t-\lfloor
t\delta/M\rfloor)>t\delta-\lfloor t\delta/M\rfloor M\geq0$. Summing
over these terms gives: $ \sum_{i=t-\lfloor t\delta/M\rfloor}^t
f(i)>\frac{t\delta\lfloor t\delta/M\rfloor}{2}, $ and thus, $
\sum_{i=0}^t \frac{1}{t}f(i)>\frac{\delta\lfloor
t\delta/M\rfloor}{2}. $ Since $f(t)>t\delta$ infinitely often,
$\limsup_{t\rightarrow\infty}\sum_{i=0}^t \frac{1}{t}f(i)=\infty,$
which is a contradiction.
\end{proof}

Theorem \ref{theorem:sufficient} suggests a more general procedure
to design feasibility optimal scheduling policies. To design a
scheduling policy in a particular scenario, we need to choose an
appropriate pseudo-debt and obtain a policy to maximize the payoff
function. Maximizing the payoff function is, however, in general,
difficult. Nevertheless, in some special cases, evaluating the
payoff function gives us simple feasibility optimal policies, or, at
least, some insights into designing a reasonable heuristic, as long
as we choose the correct pseudo-debt. In the following sections, we
demonstrate the utility of this approach.

\section{Scheduling Policy with Rate Adaptation}
\label{section:rate adaptation}

We now propose a feasibility optimal scheduling policy when rate
adaptation is employed. Channel qualities can be time-varying and
clients may have different deadlines.

To derive the scheduling policy, we define the \emph{delivery debt}
$r_n^{(3)}(k):=q_nk-d_n(kT),$ where $d_n(t)$ is the number of
delivered packets for client $n$ up to time slot $t$. Thus, $z_n
:=q_n$, while $\mu_n(k)=1$ if a packet for client $n$ is delivered
in the period, and $\mu_n(k)=0$ otherwise.

Suppose at the beginning of period $k$, the delivery debt vector is
$[r_n^{(3)}(k)]$, the channel state is $c$, and the set of arrived
packets is $S$. The transmission time for client $n$ is $s_{c,n}$
time slots, and client $n$ stipulates a delay bound of $\tau_n$.
Since transmissions are assumed to be error-free when rate
adaptation is applied, the scheduling policy consists of finding an
ordered subset $S' = \{m_1, m_2,\dots, m_{N'}\}$ of $S$ such that
$\sum_{n=1}^l s_{c,n} \leq \tau_l$, for all $1\leq l\leq m_{N'}$.
That is, when clients are scheduled according to the ordering, no
packets for clients in $S'$ would miss their respective delay
bounds. By Theorem \ref{theorem:sufficient}, a policy using an
ordered set $S'$ that maximizes $\sum_{n\in S'} r_n^{(3)}(k)$ with
the above constraint is feasibility optimal. This is a variation of
the knapsack problem. When $S'$ is selected, reordering clients in
$S'$ in an earliest-deadline-first fashion also allows all packets
to meet their respective delay bounds. Based on this observation, we
derive the feasibility optimal scheduling algorithm, the Modified
Knapsack Algorithm. Let $M[n,t]$ be the maximum debt a policy can
collect if only clients $1$ through $n$ can be scheduled and all
transmissions need to complete before time slot $t$. Thus,
$\max_{S'}\sum_{n\in S'} r_n^{(3)}(k)=M[N,T]$. Also, iteratively:
\[
M[n,t]=\left\{ \begin{array}{lr} M[n,t-1] &\mbox{if } t > \tau_n,\\
\max\{M[n-1,t],\\r_n^{(3)}(k)+M[n-1,t-s_{c,n}]\} &\mbox{otherwise,}
\end{array}\right.
\]
where $M[n-1,t]$ is the maximum debt can be collected when client
$n$ is not scheduled, and $r_n^{(3)}(k)+M[n-1,t-s_{c,n}]$ is that
when client $n$ is scheduled. The complexity of this algorithm is
$O(N\tau)$, and it is thus reasonably efficient.

\begin{algorithm}[h]
\caption{Modified Knapsack Policy} \label{alg:rate adaptation}
\begin{algorithmic}[1]
\FOR{$n=1$ to $N$}
\STATE $r_n^{(3)}(k)=q_nk-d_n(kT)$\\
\ENDFOR
\STATE Sort clients such that $\tau_1\leq \tau_2\leq \dots\leq \tau_N$\\
\STATE $S'[0,0] = \phi$\\
\STATE $M[0,0] = 0$\\
\FOR{$n=1$ to $N$} \FOR{$t=1$ to $T$}\IF{$t>\tau_n$}
\STATE $M[n,t] = M[n,t-1]$\\
\STATE $S'[n,t] = S'[n,t-1]$\\
\ELSIF{client $n$ has a packet AND\\ $r_n^{(3)}(k)+M[n-1,t-s_{c,n}]
> M[n-1,t]$}
\STATE $M[n,t] = r_n^{(3)}(k)+M[n-1,t-s_{c,n}]$\\
\STATE $S'[n,t] = S'[n-1,t-s_{c,n}]+\{n\}$\\
\ELSE
\STATE $M[n,t] = M[n-1,t]$\\
\STATE $S'[n,t] = S'[n-1,t]$\\
\ENDIF\ENDFOR \ENDFOR
\STATE schedule according to $S'[N,T]$\\
\end{algorithmic}
\end{algorithm}

\section{Computationally Tractable Scheduling for Time-Varying Channels}
\label{section:time varying channels}

We now consider the case when rate adaptation is not available, and
propose a scheduling policy for time-varying channels and
homogeneous delay bounds. We show that the policy is feasibility
optimal among all priority-based policies. We use the delivery debt,
$r_n^{(3)}(k)$, of Section \ref{section:rate adaptation}.

Suppose at the beginning of a period, the delivery debt vector is
$[r_n^{(3)}(k)]$, the channel state is $c$, and the set of arrived
packets is $S$. We wish to find the priority ordering that maximizes
the payoff function
$\mu_{tot}(k)=\sum_{n=1}^Nr_n^{(3)}(k)^+E\{\mu_n(k)\}$, where in the
expectation we suppose that the channel state $c$ and the set of
arrival packets $S$ are both fixed. Obviously, transmitting a packet
from a client $n$ with $r_n^{(3)}(k)\leq 0$ will not increase the
value of $\mu_{tot}(k)$. Thus, we do not give priorities to clients
with non-positive delivery debts. For ease of the remaining
discussion, we further assume $r_n^{(3)}(k)>0$ for all $n$.

Consider two orderings, $A$ and $B$: In $A$, the priority order is
$\{1,2,\dots,N\}$, while, in $B$, the priority order is
$\{1,2,\dots,m-1,m+1,m,m+2,m+3,\dots,N\}$. Let the values of the
payoff functions be $\mu_{tot}^A$ and $\mu_{tot}^B$. Since clients
$1$ through $m-1$ have the same priorities in both orderings and
their priorities are higher than the remaining clients, the values
of $E\{\mu_n(k)\},$ $1\leq n\leq m-1$ are the same for both
orderings. On the other hand, clients $m+2$ through $N$ also have
the same priorities in both orderings and they can be scheduled only
after the packets for clients $1$ through $m+1$ are delivered. The
probabilities of packet deliveries for these clients are the same
under the two orderings. Thus, to compare the two orderings, one
only needs to evaluate the probabilities of packet delivery for
client $m$ and $m+1$. We further notice that the probabilities that
packets for both clients $m$ and $m+1$ are delivered are also the
same for both orderings. With $e_{n}$ the event that the packet for
client $n$ is delivered,
\begin{align*}
\mu_{tot}^A-\mu_{tot}^B=&r_m^{(3)}(k)Prob\{e_m\backslash
e_{m+1}|\mbox{ordering
A}\}\\&-r_{m+1}^{(3)}(k)Prob\{e_{m+1}\backslash e_{m}|\mbox{ordering
B}\}.
\end{align*}

Suppose that there are $\tau'$ time slots left when all packets from
client $1$ through $m-1$ have been delivered. The probability
distribution of $\tau'$ is the same under both orderings. Since the
channel reliability is $p_{c,n}$,
\begin{align*}
&\mu_{tot}^A-\mu_{tot}^B\\
=&\mbox{$r_m^{(3)}(k)E\{\sum_{t=1}^{\tau'}p_{c,m}(1-p_{c,m})^{t-1}(1-p_{c,m+1})^{\tau'-t}\}$}\\
&\mbox{$-r_{m+1}^{(3)}(k)E\{\sum_{t=1}^{\tau'}p_{c,m+1}(1-p_{c,m+1})^{t-1}(1-p_{c,m})^{\tau'-t}\}$}\\
=&\mbox{$[r_m^{(3)}(k)p_{c,m}-r_{m+1}^{(3)}(k)p_{c,m+1}]$}\\&\times
\mbox{$E\{\sum_{t=0}^{\tau'-1}(1-p_{c,m})^{t}(1-p_{c,m+1})^{\tau'-t-1}\}.$}
\end{align*}

Thus, $\mu_{tot}^A\geq\mu_{tot}^B$ if $r_m^{(3)}(k)p_{c,m}\geq
r_{m+1}^{(3)}(k)p_{c,m+1}$. This leads us to obtain the Joint
Debt-Channel Policy. The computation time is only $O(N\log N)$.

\begin{algorithm}[h]
\caption{Joint Debt-Channel Policy} \label{alg:time varying}
\begin{algorithmic}[1]
\FOR{$n=1$ to $N$}
\STATE $r_n^{(3)}(k)=q_nk-d_n(kT)$, for all $n$\\
\ENDFOR
\STATE Sort clients with a packet arrival such that $r_1^{(3)}(k)p_{c,1}\geq r_2^{(3)}(k)p_{c,2}\geq\dots\geq r_{N_0}^{(3)}(k)p_{c,N_0}> 0\geq r_{N_0+1}^{(3)}(k)p_{c,N_0+1}\geq\dots$\\
\STATE Transmit packets for clients $1$ through $N_0$ by the ordering\\
\end{algorithmic}
\end{algorithm}

\begin{theorem}
The joint debt-channel policy is feasibility optimal among all
priority-based policies.
\end{theorem}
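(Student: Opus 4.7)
The plan is to combine part (\ref{item:two}) of Theorem \ref{theorem:sufficient} with the adjacent-swap calculation that the authors have already carried out in the text preceding the statement. Specifically, Theorem \ref{theorem:sufficient}(\ref{item:two}) says that any priority-based policy which, at the start of each period, chooses a priority ordering that maximizes the per-period payoff $\sum_n E\{r_n(k)^+\mu_n(k)\mid c_k,S_k,[r_m(k)]\}$ is feasibility optimal in $\mathbb{P}$. So it suffices to verify that the ordering produced by the Joint Debt-Channel Policy is in fact a maximizer of this payoff, with $r_n(k)=r_n^{(3)}(k)$ the delivery debt.

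First I would reduce to the ``interesting'' clients. Any client $n$ with $r_n^{(3)}(k)\leq 0$ contributes $r_n^{(3)}(k)^+\,E\{\mu_n(k)\}=0$ to the payoff regardless of where it sits in the ordering, and any client without an arrival in $S_k$ has $\mu_n(k)\equiv 0$; so I may restrict attention to orderings of the clients with a pending packet and strictly positive debt. For those clients I may further assume that ties in $r_n^{(3)}(k)p_{c,n}$ are broken arbitrarily, since swapping two tied adjacent clients leaves the payoff unchanged.

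Second, I would invoke the adjacent-swap computation already in the text: for any ordering $A$ and the ordering $B$ obtained from $A$ by interchanging the clients in positions $m$ and $m+1$, the text has shown
\begin{equation*}
\mu_{tot}^A-\mu_{tot}^B
=\bigl[r_m^{(3)}(k)p_{c,m}-r_{m+1}^{(3)}(k)p_{c,m+1}\bigr]\cdot E\Bigl\{\textstyle\sum_{t=0}^{\tau'-1}(1-p_{c,m})^t(1-p_{c,m+1})^{\tau'-t-1}\Bigr\},
\end{equation*}
where the expectation factor is non-negative and, crucially, is the same for both $A$ and $B$ (being determined only by the higher-priority prefix, which is common). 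Hence exchanging two adjacent clients whose product $r_n^{(3)}(k)p_{c,n}$ is out of decreasing order can only (weakly) increase the payoff.

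Third, I would run a standard bubble-sort argument: starting from any priority ordering of the strictly-positive-debt clients with arrivals, a finite sequence of such adjacent exchanges transforms it into the ordering used by the Joint Debt-Channel Policy (i.e., sorted by decreasing $r_n^{(3)}(k)p_{c,n}$), and the payoff is weakly increased at each step. Therefore the Joint Debt-Channel ordering attains the maximum payoff over all priority orderings in $\mathbb{P}$, and Theorem \ref{theorem:sufficient}(\ref{item:two}) yields feasibility optimality. The only real subtlety is making sure the expectation-over-$\tau'$ factor really is ordering-invariant between $A$ and $B$; this is immediate because $\tau'$ is determined by the common higher-priority prefix and the (random) link outcomes for those clients, none of which depend on the order of positions $m$ and $m+1$.
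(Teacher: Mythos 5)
Your proposal is correct and follows essentially the same route as the paper: both reduce to the positive-debt clients with arrivals, invoke the adjacent-swap identity derived in the text to show that exchanging an out-of-order adjacent pair cannot decrease the payoff, and then bubble-sort an arbitrary priority ordering into the Joint Debt-Channel ordering before applying part (\ref{item:two}) of Theorem \ref{theorem:sufficient}. The paper phrases the reduction as explicit deletion/appending steps on a competing policy's ordering, but the content is identical.
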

\begin{proof}
Let $\eta$ be the joint debt-channel policy and $\eta'$ any
priority-based policy. Suppose the priorities assigned by the
policies are $\eta_1, \eta_2,\dots,\eta_m$, and
$\eta'_1,\eta'_2,\dots,\eta'_{m'}$. We modify $\eta'$ as follows:
\begin{enumerate}
\item Delete any element in $\eta'_1\sim\eta'_{m'}$ with
$r^{(3)}_{\eta'_n}(k)\leq 0$.
\item For any client $n$ with $r^{(3)}_n(k)>0$ that is not in
$\eta'_1\sim\eta'_{m'}$, append it at the end of the ordering.
\item If $\eta'_1\sim\eta'_{m'}$ is still different from
$\eta_1\sim\eta_m$, there exists some $n$ such that
$r_{\eta'_n}^{(3)}(k)p_{c,{\eta'_n}}<r_{\eta'_{n+1}}^{(3)}(k)p_{c,{\eta'_{n+1}}}$.
Swap $\eta'_n$ and $\eta'_{n+1}$.
\item Repeat Step 3 until the two orderings are the same.
\end{enumerate}

Steps 1 and 2 will not decrease the value of the payoff function. As
derived above, Step 3 does not decrease the value of the payoff
function, either. Thus, $\eta$ maximizes the payoff function and is
feasibility optimal in $\mathbb{P}$.
\end{proof}

\section{A Heuristic for Heterogeneous Delay Bounds}
\label{section:heterogeneous delay}

We now describe a heuristic for packet scheduling, for the case
where each channel state is static and transmission rate is fixed,
but clients require different delay bounds. We use $p_n$ to
represent channel reliability.

We will use the time-based debt, $r_n^{(1)}(k)$, as discussed in
Example \ref{example:representing debt}. The payoff function is
$E\{\sum_{n=1}^N r^{(1)}_n(k)^+\mu_n(k)\}$.

Suppose, without loss of generality, that at the beginning of a
period, packets for clients $\{1,2,\dots,N_0\}$ arrive. We further
assume that $\tau_1\leq\tau_2\leq\dots\leq \tau_{N_0}$. Let
$\gamma_n$ be the number of transmissions the AP needs to make for
client $n$ for success. While $\gamma_n$ is a random variable that
cannot be foretold, we examine how to maximize
$\sum_{n=1}^{N_0}r_n^{(1)}(k)^+\mu_n(k)$ if we \textit{knew}
$\gamma_n$.

We solve this by proceeding backwards in time. During time slots
$[\tau_{N_0-1}+1, \tau_{N_0}]$, all packets except the one for
client $N_0$ have expired, and we can only make transmissions for
client $N_0$ during these time slots. Thus, it does not make sense
to schedule client $N_0$ for more than
$\gamma^{N_0-1}_{N_0}:=\gamma_{N_0}-(\tau_{N_0}-\tau_{N_0-1})$
transmissions before time slot $\tau_{N_0-1}$. Next, in the time
slots between $[\tau_{N_0-2}+1, \tau_{N_0-1}]$, only clients $N_0-1$
and $N_0$ can be scheduled. An obvious choice is to schedule the
client with larger debt first, with the restriction that it is not
scheduled for more than $\gamma^{N_0-1}_n$ time slots, and to then
schedule the other client. (For simplicity, we let
$\gamma^{N_0-1}_{N_0-1}:=\gamma_{N_0-1}$.) We can further obtain the
remaining transmissions allowed for client $n$ before time slot
$\tau_{N_0-2}$, which we call $\gamma^{N_0-2}_n$, as
$\gamma^{N_0-1}_n$ minus the number of transmissions scheduled for
client $n$ during time slots $[\tau_{N_0-2}+1,\tau_{N_0-1}]$.
Transmissions of the remaining time slots are scheduled similarly.

While it is impossible to know the exact value of $\gamma_n$ in
advance, we can estimate it. One estimate is its expected value,
$\frac{1}{p_n}$. However, this estimate does not consider the
timely-throughput requirements. If a client has significantly larger
debt than others, a reasonably good policy would allocate enough
time slots so that the probability of packet delivery for the client
in this period is at least its delivery ratio bound,
$\frac{q_n}{\sum_{n\in S}R(S)}$, given that a packet for client $n$
arrived. So we estimate $\gamma_n$ by the number of transmissions
that we need to allocate for client $n$ so that it can achieve its
delivery ratio bound. Since the channel reliability for client $n$
is $p_n$, this estimate $\gamma_n$ is
$\lceil\log_{1-p_n}(1-\frac{q_n}{\sum_{n\in S}R(S)})\rceil$. We thus
derive the Adaptive-Allocation Policy shown in Algorithm
\ref{alg:deadlines}.

As a final remark, note that in all the three policies discussed in
this paper, we do not schedule transmissions for clients with
non-positive debts. This restriction improves the performance for
clients with non-real time traffic. In practice, it is possible that
clients with real-time traffic and clients with non-real time
traffic coexist. Thus, it is important not to allocate too much of
the resource to real-time clients and starve those with non-real
time traffic.

\begin{algorithm}[h]
\caption{Adaptive-Allocation Policy} \label{alg:deadlines}
\begin{algorithmic}[1]
\FOR{$n=1$ to $N$}
\STATE $r_n^{(1)}(k)=$ time-based debt\\
\STATE $\gamma_n = \lceil\log_{1-p_n}(1-\frac{q_n}{\sum_{n\in
S}R(S)})\rceil$\\
\ENDFOR
\STATE Sort clients so that packets for clients $1\sim N_0$ arrive and $r_1^{(1)}(k)\geq r_2^{(1)}(k)\geq\dots\geq r_{N_0}^{(1)}(k)$\\
\STATE $alloc\leftarrow n\times1-$vector\\
\FOR{$t=T$ to $1$}
\STATE $n\leftarrow 1$\\
\WHILE{($\tau_n>t$ or $\gamma_n\leq 0$) and $n\leq N_0$}
\STATE $n\leftarrow n+1$\\
\ENDWHILE \IF{$r_n^{(1)}(k)>0$}
\STATE $alloc[t]\leftarrow n$\\
\ELSE \STATE $alloc[t]\leftarrow N_0+1$\\
\ENDIF \IF{$n\leq N_0$}
\STATE $\gamma_n\leftarrow \gamma_n-1$\\
\ENDIF \ENDFOR \FOR{each time slot $t$} \IF{$alloc[t]\leq N_0$ and
the packet for client $alloc[t]$ has not been delivered}
\STATE transmit the packet for client $alloc[t]$\\
\ELSE
\STATE transmit the packet with the largest positive time-based debt\\
\ENDIF \ENDFOR
\end{algorithmic}
\end{algorithm}

\section{Simulation Results}    \label{section:simulation}

We have implemented the scheduling policies discussed in previous
sections by using the IEEE 802.11 PCF standard in the \textit{ns-2}
simulator. We present the simulation results for the scenario with
time-varying channels, and with clients requiring different delay
bounds. In each scenario, we compare our policies against the two
largest debt first policies of \cite{IHH09MobiHoc}, and a policy
that assigns priorities to clients randomly, \emph{random}. IEEE
802.11e, an enhancement to 802.11 for QoS, allows clients with
real-time traffic to use smaller contention window and inter frame
space to obtain priorities over clients with non-real time traffic.
However, clients with real-time traffic have to compete with each
other in a random access manner with equal channel access
probabilities, without any QoS based preference or discrimination.
Further, the inter frame space and contention window size are
smaller in PCF than in 802.11e. Thus, the random policy can be
viewed as an improved version of 802.11e. Similar to the previous
work, we conduct two sets of simulations for each scenario, one with
clients carrying VoIP traffic, and one with clients carrying video
streaming traffic. The major difference between the two settings
lies in their traffic patterns. Many VoIP codecs generate packets
periodically. Thus, future packet arrivals can be easily predicted
and may be dependent among different clients. For example, if two
clients generate packets at the same rate, then either all or none
of their packets arrive simultaneously. On the other hand, video
streaming technology, such as MPEG, may generate traffic with
variable-bit-rate (VBR). Thus, packets arrive at the AP
probabilistically, with probability depending on the context of the
current frame, and arrivals are independent among different clients.

For the VoIP traffic, we follow the standards of the ITU-T G.729.1
\cite{G729} and G.711 \cite{G711} codecs. Both codecs generate
traffic periodically. G.729.1 generates traffic with bit rates 8 --
32 kbits/s, while G.711 generates traffic at a higher rate of 64
kbits/s. We assume the period length, $T$, is 20 ms, and the payload
size of a packet is 160 Bytes. The codecs generate one packet every
several periods; with the duration between packet arrivals depending
on the bit rate used.

We use MPEG for the video streaming setting. MPEG VBR traffic is
usually modeled as a Markov chain consisting of three activity
states \cite{LJ99}\cite{IVMF04}. Each state generates traffic
probabilistically at different mean rates, with the state being
determined by the current frame of the video. The statistical mean
rates in each state are those obtained in an experimental study
\cite{IVMF04}. We use them in setting the traffic patterns of MPEG
traffic. We assume the period length to be 6 ms and the payload size
of a packet to be 1500 Bytes. Table \ref{table:mpeg traffic} shows
the statistical results of the experimental study \cite{IVMF04},
where we also present them in terms of the packet arrival
probability of our setting. In Table \ref{table:mpeg traffic},
``Data rate'' is measured in bits/GoP, where 1 GoP= 240 ms.

\begin{table}[t]
  \centering
  \caption{MPEG Traffic Pattern}\label{table:mpeg traffic}
  \begin{tabular}{|l|l|l|l|}
    \hline
    Activity & Great & High & Regular \\
    \hline
    Data rate & 501597 & 392237 & 366587 \\
    \hline
    Arrival probability & 1 & 0.8 & 0.75 \\
    \hline
  \end{tabular}
\end{table}

We simulate 20 runs for each setting, each run lasting one minute in
simulated time. All results shown are averaged over the 20 runs. A
natural performance metric for a client is the delivery debt,
$r_n^{(3)}(k)$. The performance of the system is measured by the sum
of the positive delivery debts of the clients, that is,
$\sum_{n=1}^N r_n^{(3)}(k)^+$,  the \textit{total delivery debt}. In
addition to evaluating how well the tested policies serve clients
with real-time traffic, we also wish to know whether the policies
starve those with non-real time traffic. Hence we add a client with
saturated non-real time traffic in all simulations. Packets for the
non-real time client are scheduled in all time slots that are left
idle otherwise. We measure the throughput of the client with
non-real time traffic by the average number of packets delivered.

\subsection{Rate Adaptation}

We present the simulation results under the scenario where rate
adaptation is applied, channels are time-varying, and clients may
require different delay bounds.

We first show the results for VoIP traffic. We use IEEE 802.11b as
the MAC protocol, which can provide a maximum data rate of 11 Mb/s.
We assume that the channel capacity of each client alternates
between 11 Mb/s and 5.5 Mb/s. Simulation results suggest that the
times needed for a transmission, including all MAC overheads such as
the time for waiting an ACK, are around 480 $\mu$s and 610 $\mu$s
for the two transmission rates, respectively. Ideally, the length of
a time slot should be a common divisor of the transmission times
needed under the two used data rates. We approximate this value by
160 $\mu$s. Thus, transmitting a packet requires 3 time slots when
using 11 Mb/s and 4 time slots when using 5.5 Mb/s. Further, a
period consists of 125 time slots.

There are two groups of clients, $A$ and $B$. Clients in group $A$
generate one packet every three periods, or at rate 21.3 kbits/s,
and require 90$\%$ of each of the clients' packets to be delivered,
or a timely-throughput requirement of 19.2 kbits/s. Clients in group
$B$ generate one packet every two periods at rate 32 kbits/s, and
require 70$\%$ of each of the clients' packets to be delivered,
corresponding to a timely-throughput requirement of 22.4 kbits/s.
The two groups can be further divided into subgroups, $A_1$, $A_2$,
$A_3$, $B_1$, and $B_2$, each with 22 clients. Clients in subgroup
$A_i$ generate packets at periods $[i,i+3,i+6,\dots]$, and clients
in subgroup $B_i$ generate packets at periods $[i,i+2,i+4,\dots]$.
Finally, clients in group $A$ require a delay bound equal to the
period length, or 125 time slots, while clients in group $B$ require
a delay bound equal to two-third of the period length, or 83 time
slots.

Simulation results are shown in Figure \ref{fig:audio_rate}. The
modified knapsack policy incurs the least total delivery debt among
all evaluated policies. This is because all the other three policies
neglect the time-varying channels with different data rates and the
heterogeneous delay bounds. Further, by only scheduling those
clients with positive delivery debts, the modified knapsack policy
achieves higher throughput for the non-real time client than both
the policies proposed in \cite{IHH09MobiHoc}. The random policy
results in the highest throughput for the non-real time client.
However, this is because it sacrifices the real-time clients. In
fact, its total delivery debt is more than 300 times larger than the
total delivery debt of the modified backpack policy. This huge
difference suggests that the random policy, and therefore also
802.11e, are not adequate for providing QoS when multiple clients
with real-time traffic are present.

\begin{figure}[t]\subfloat{
\label{fig:audio_channel_debt} 
\includegraphics[width=1.6in]{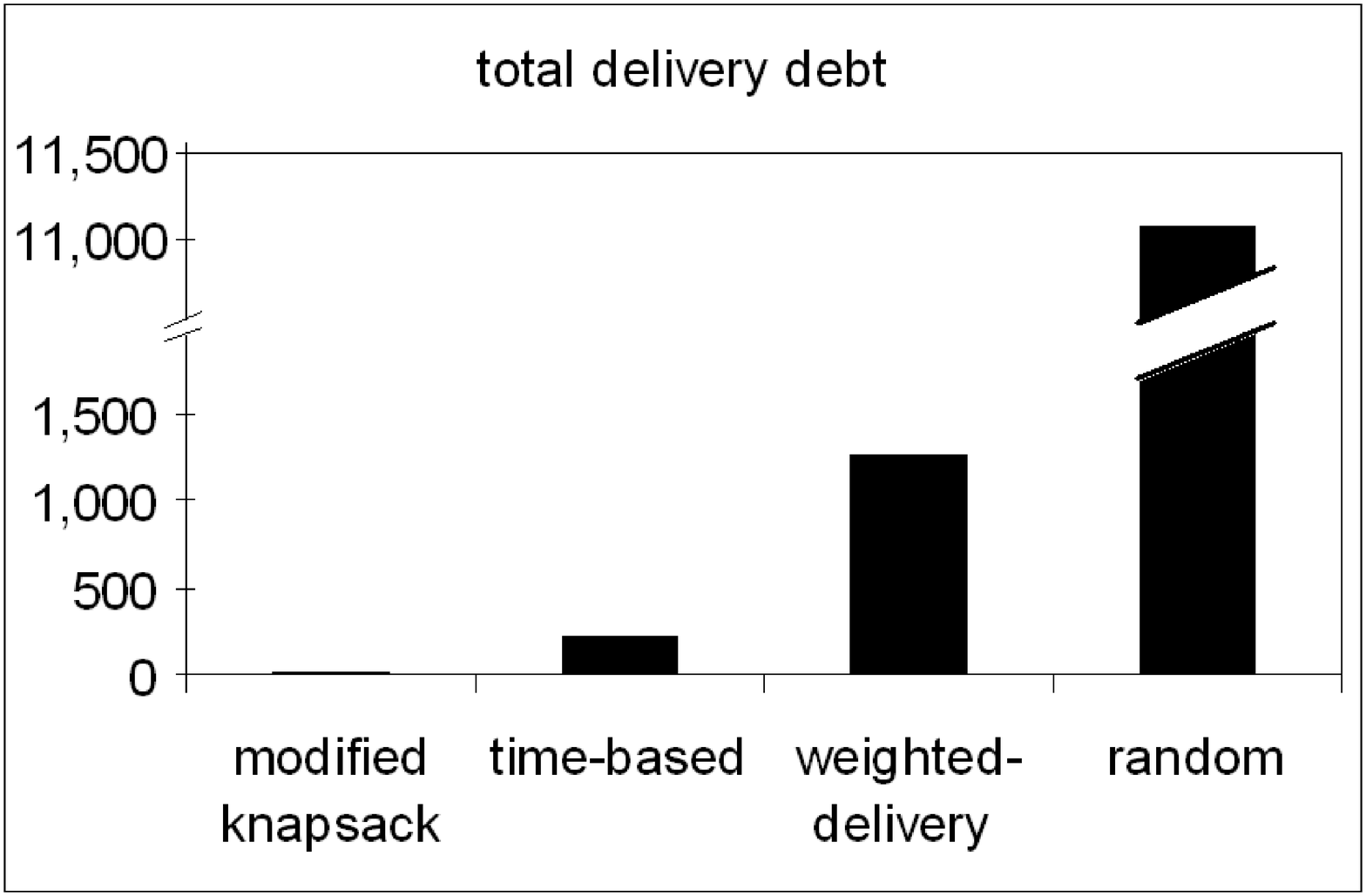}}
\hspace{0.01\linewidth}\subfloat{
\label{fig:audio_channel_throughput} 
\includegraphics[width=1.6in]{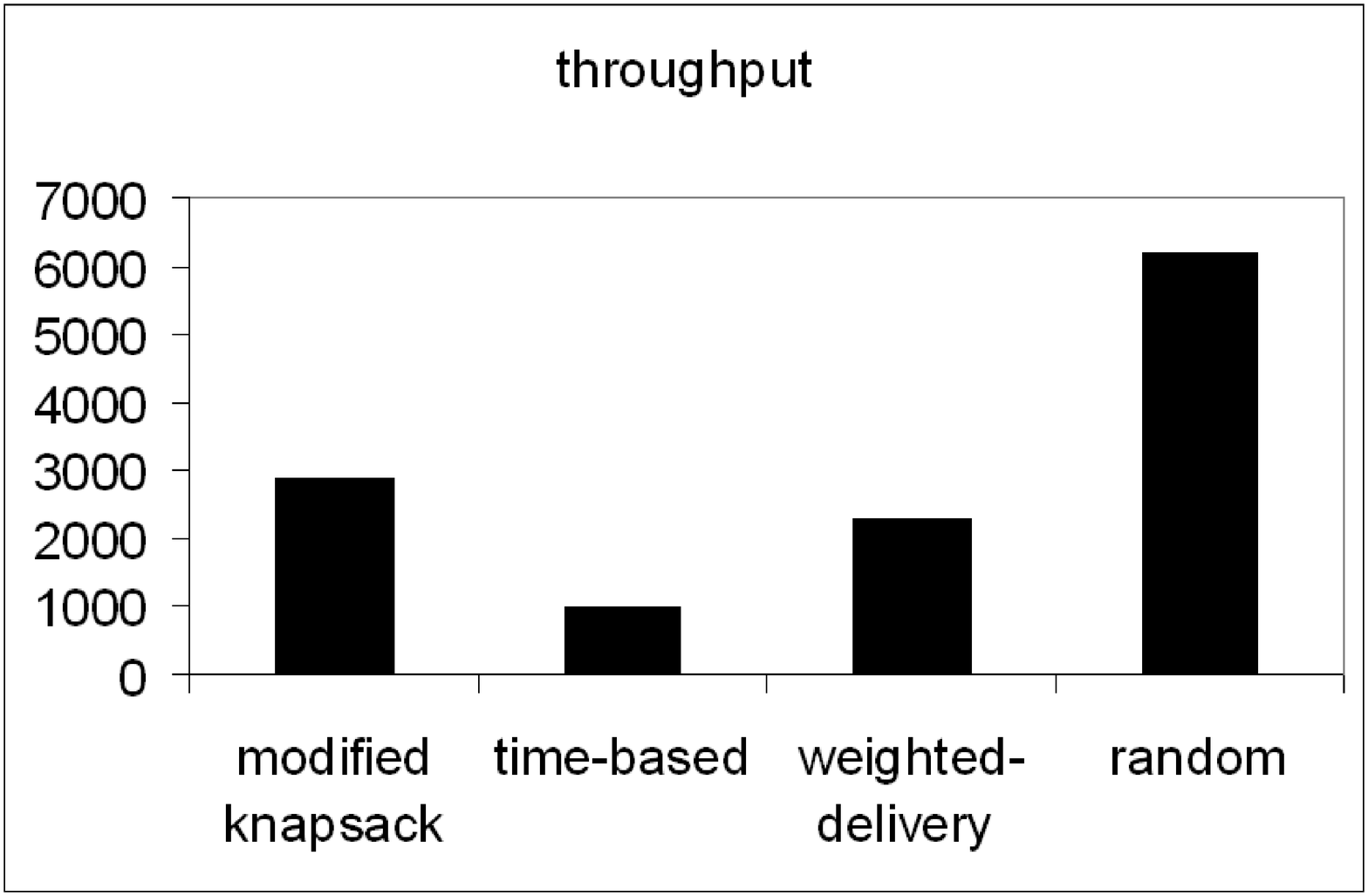}}
\caption{Performance for VoIP traffic with rate
adaptation.}\label{fig:audio_rate}
\end{figure}

Next we consider the scenario with MPEG traffic. Since video
streaming requires much higher bandwidth than VoIP, we use 802.11a
as the underlying MAC, which can support up to 54 Mb/s. We assume
that channel capacity for each client alternates between 54 Mb/s and
24 Mb/s. The transmission times for a data-ACK handshake require 660
$\mu$s with 54 Mb/s data rate, and 940 $\mu$s with 24 Mb/s. The
length of a time slot is 60 $\mu$s. Thus, the transmission times for
the two data rates are 11 time slots and 16 time slots,
respectively. Further, a period consists of 100 time slots.

We again assume there are two groups of clients. Clients in group
$A$ generate packets according to Table \ref{table:mpeg traffic},
and clients in group $B$ are assumed to offer only lower quality
video by generating packets only 80$\%$ as often as those in group
$A$, in each of the three states. We assume clients in group $A$
require 90$\%$ delivery ratios, and clients in group $B$ require
60$\%$ delivery ratios. Since the length of a period for MPEG is
very small, it is less meaningful to discuss heterogeneous delay
bounds. Thus, we assume all clients require a delay bound equal to
the length of a period. We further assume that there are 6 clients
in both groups.

Simulation results are shown in Figure \ref{fig:video_rate}. As in
the case of VoIP traffic, the modified knapsack policy achieves the
smallest total delivery debt among all the four policies. Also, by
not scheduling clients with non-positive debts, the modified
backpack policy also achieves the highest throughput for the
non-real time client.

\begin{figure}[t]\subfloat{
\label{fig:audio_channel_debt} 
\includegraphics[width=1.6in]{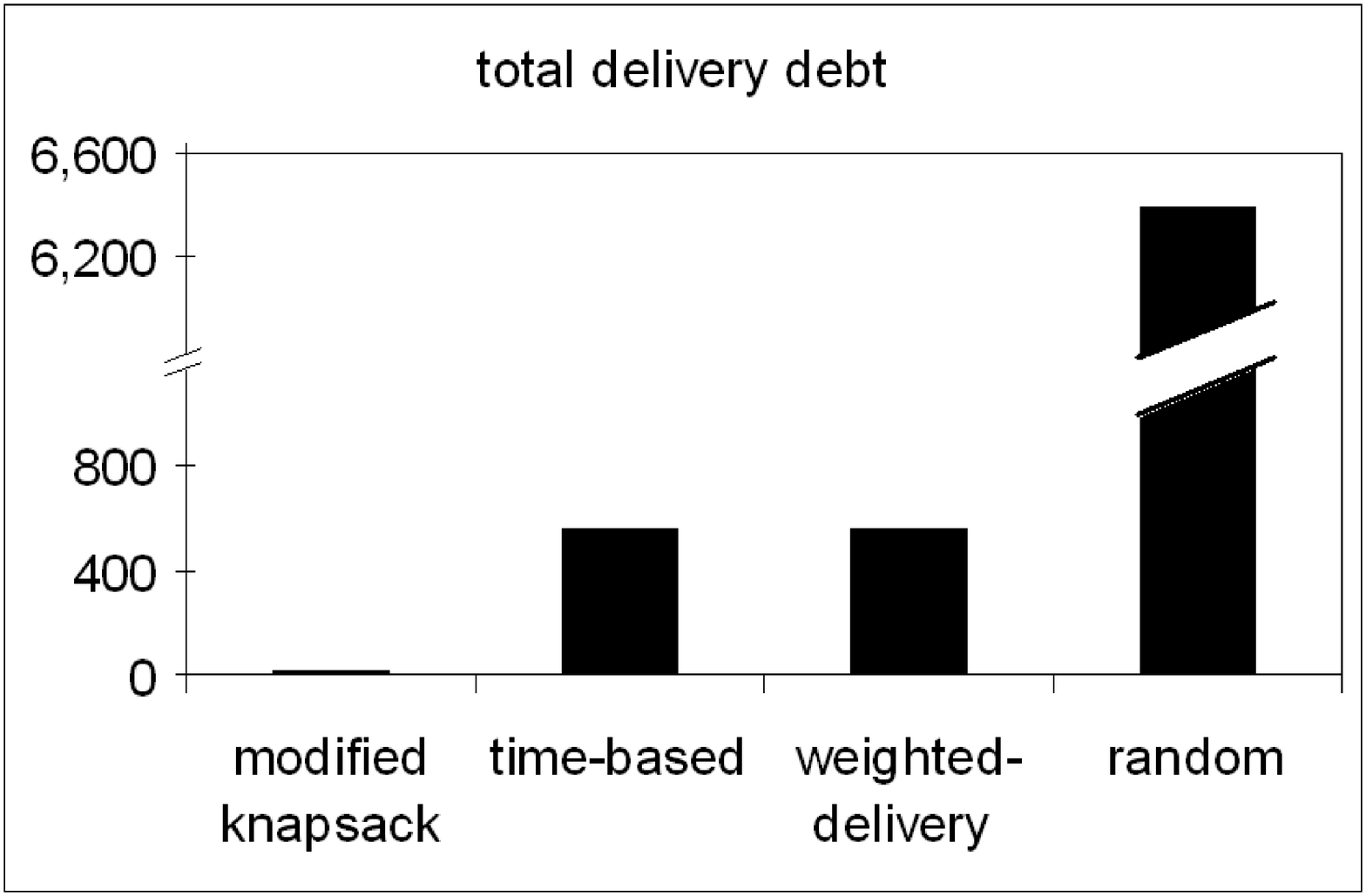}}
\hspace{0.01\linewidth}\subfloat{
\label{fig:audio_channel_throughput} 
\includegraphics[width=1.6in]{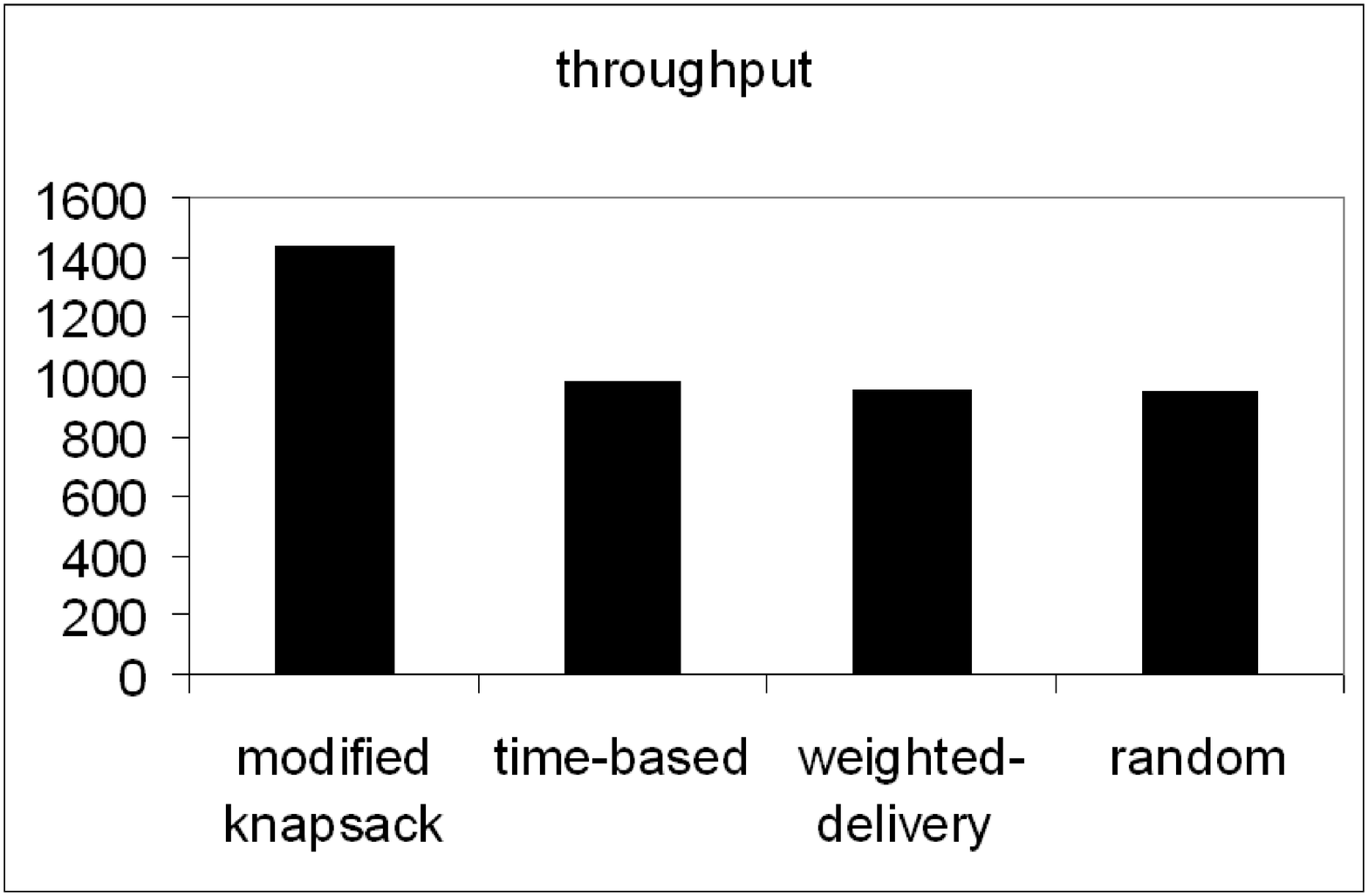}}
\caption{Performance for MPEG traffic with rate
adaptation.}\label{fig:video_rate}
\end{figure}

\subsection{Time-varying Channels}
We now consider the scenario with time-varying channels, with all
clients requiring delay bounds equal to period length. We model the
wireless channel by the widely used Gilbert-Elliot model
\cite{EOE63}\cite{ENG60}\cite{HSW95}, with the wireless channel
considered as a two-state Markov chain, with ``good'' state and
``bad'' states. A simulation study by Bhagwat et al \cite{PB97}
shows that the link reliability can be modeled as 100$\%$ when the
channel is in the good state, and 20$\%$ when the channel is in the
bad state. The duration that the channel stays in one state is
exponentially distributed with mean 1 -- 10 sec for the good state,
and 50 -- 500 msec for the bad state.

While modifying the two largest debt first policies as suggested in
Section \ref{subsection:extension for time varying channels} will
yield feasibility optimality, such modification requires solving the
linear programming problem and is intractable. Rather, we consider
some easier modifications for the two policies. For the largest
time-based debt first policy, we modify it so that it treats the
channel as a static one, with link reliability equal to the
time-averaged link reliability. For the largest weighted-delivery
debt first policy, the weighted-delivery debt for client $n$ at time
slot $t$ is defined as $\frac{t}{T}q_n-d_n(t)$ divided by the
current link reliability.

For the case of VoIP traffic, we use 802.11b as the underlying MAC
and use a fixed transmission rate of 11 Mb/s. We consider the same
two groups of clients as in the previous section. We assume that the
mean duration of the bad state is 500 msec for all clients, and the
mean duration of the good state is $1+0.5n$ sec for the $n^{th}$
client in each subgroup. The time-average link reliability of the
$n^{th}$ client in each subgroup can be computed as
$\frac{2.2+n}{3+n}$. There are 19 clients in each of the subgroups.

Simulations results are shown in Figure \ref{fig:audio_channel}. The
joint debt-channel policy incurs near zero total delivery debt,
while all the other policies have much larger total delivery debts.
The fact that the largest time-based debt first policy fails to
fulfill the set of clients suggests that only considering the
average channel reliability, without taking channel dynamics into
account, is not satisfactory. A somewhat surprising result is that
the total delivery debt for the largest weighted-delivery debt first
policy is even larger than that for the random policy. This is
because the policy favors those clients with poor channels. When the
channel state is time-varying, it may make more sense to postpone
the transmissions for a client with a poor channel until its channel
condition turns better. Thus, using weighted-delivery debt for
time-varying channels is not only inaccurate, but even harmful in
some settings. It can also be shown that the throughput for the
client with saturated non-real time traffic is the highest with the
joint debt-channel policy. By only scheduling those real-time
clients with positive delivery debts, the policy prevents putting
too much effort into any real-time client, and thus reserves enough
resources for clients with non-realtime traffic.

\begin{figure}[t]\subfloat{
\label{fig:audio_channel_debt} 
\includegraphics[width=1.6in]{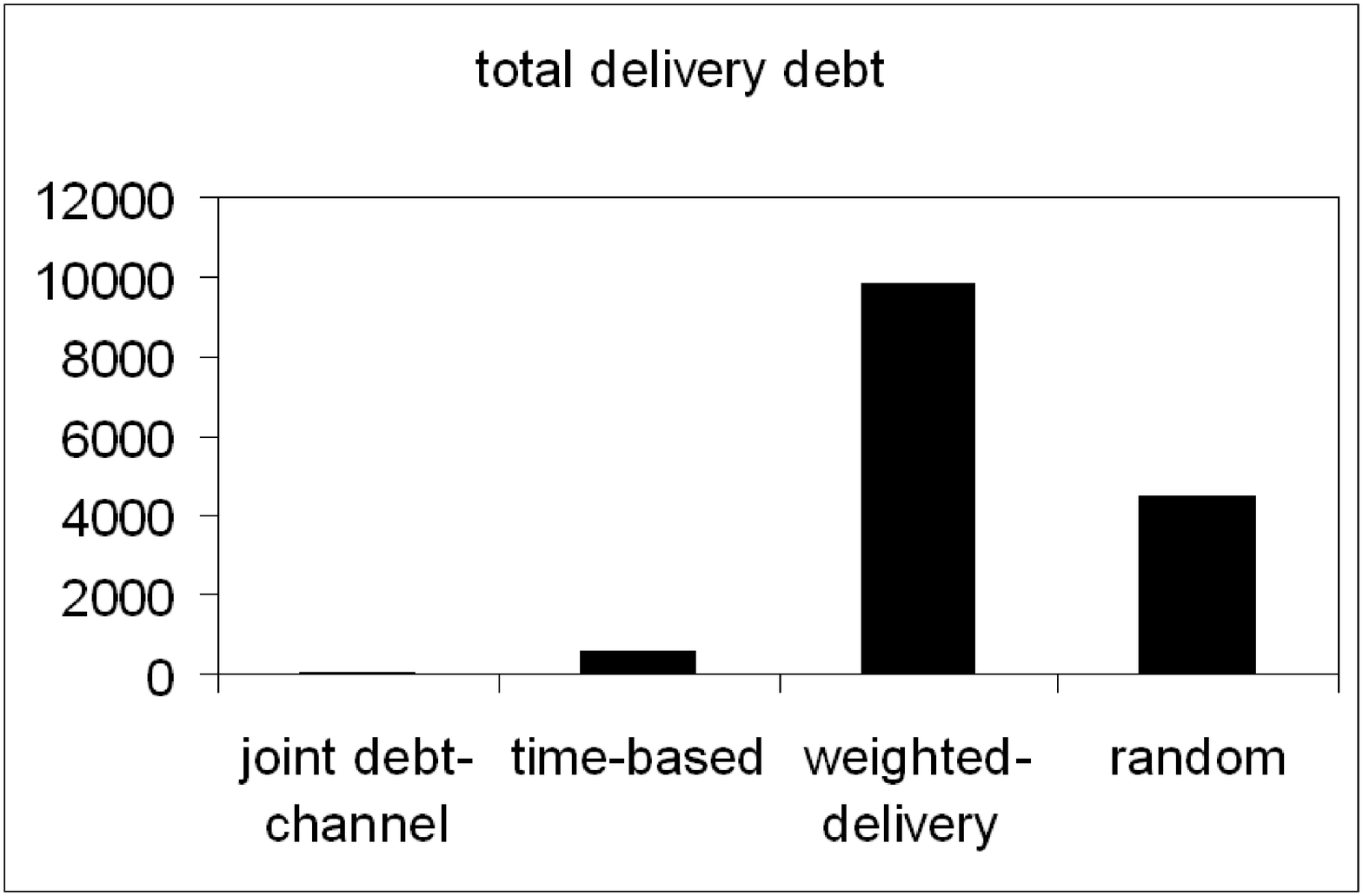}}
\hspace{0.01\linewidth}\subfloat{
\label{fig:audio_channel_throughput} 
\includegraphics[width=1.6in]{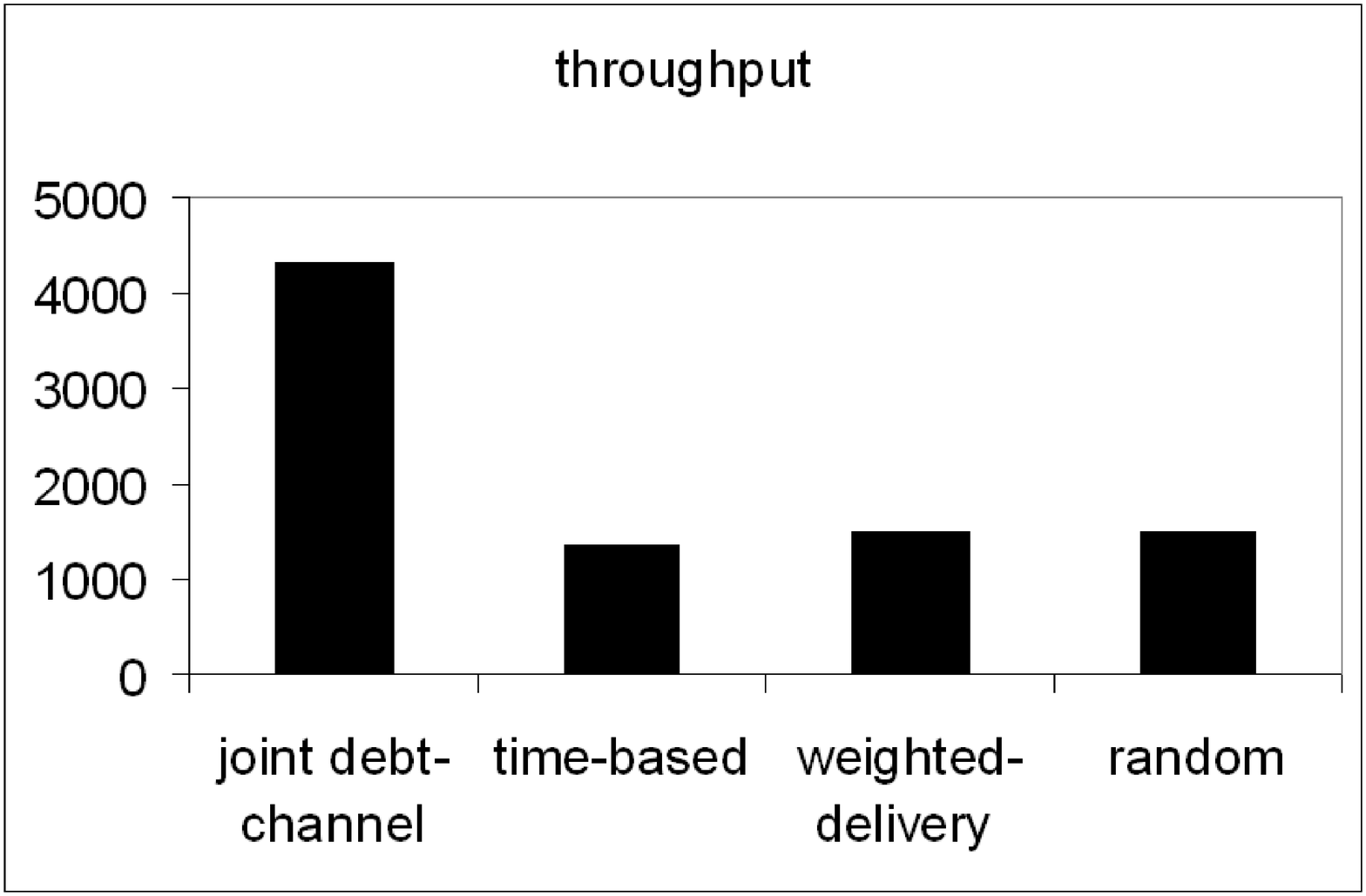}}
\caption{Performance for VoIP traffic under time-varying
channels.}\label{fig:audio_channel}
\end{figure}

For MPEG traffic, we assume there are two groups of clients, with
the same traffic patterns and delivery ratio requirements as those
in the previous section. We use 802.11a with a fixed data rate of 54
Mb/s as the underlying MAC. The mean duration when the channel is in
the bad state is 500 msec for all clients, and the mean duration in
the good state is assumed to be $1+0.5n$ sec for the $n^{th}$ client
in each group. There are 4 clients in both groups.

Simulation results are shown in Figure \ref{fig:video_channel}. As
in the case of VoIP traffic, the joint debt-channel policy incurs
very small total delivery debt while all the other policies have
significantly higher total delivery debts. This result suggests that
the simple modifications of the two largest debt first policies do
not work under time-varying channels. Also, by only scheduling
real-time clients with positive delivery debts, the joint
debt-channel policy achieves higher throughput for the client with
non-real time traffic.

\begin{figure}[t]\subfloat{
\label{fig:video_channel_debt} 
\includegraphics[width=1.6in]{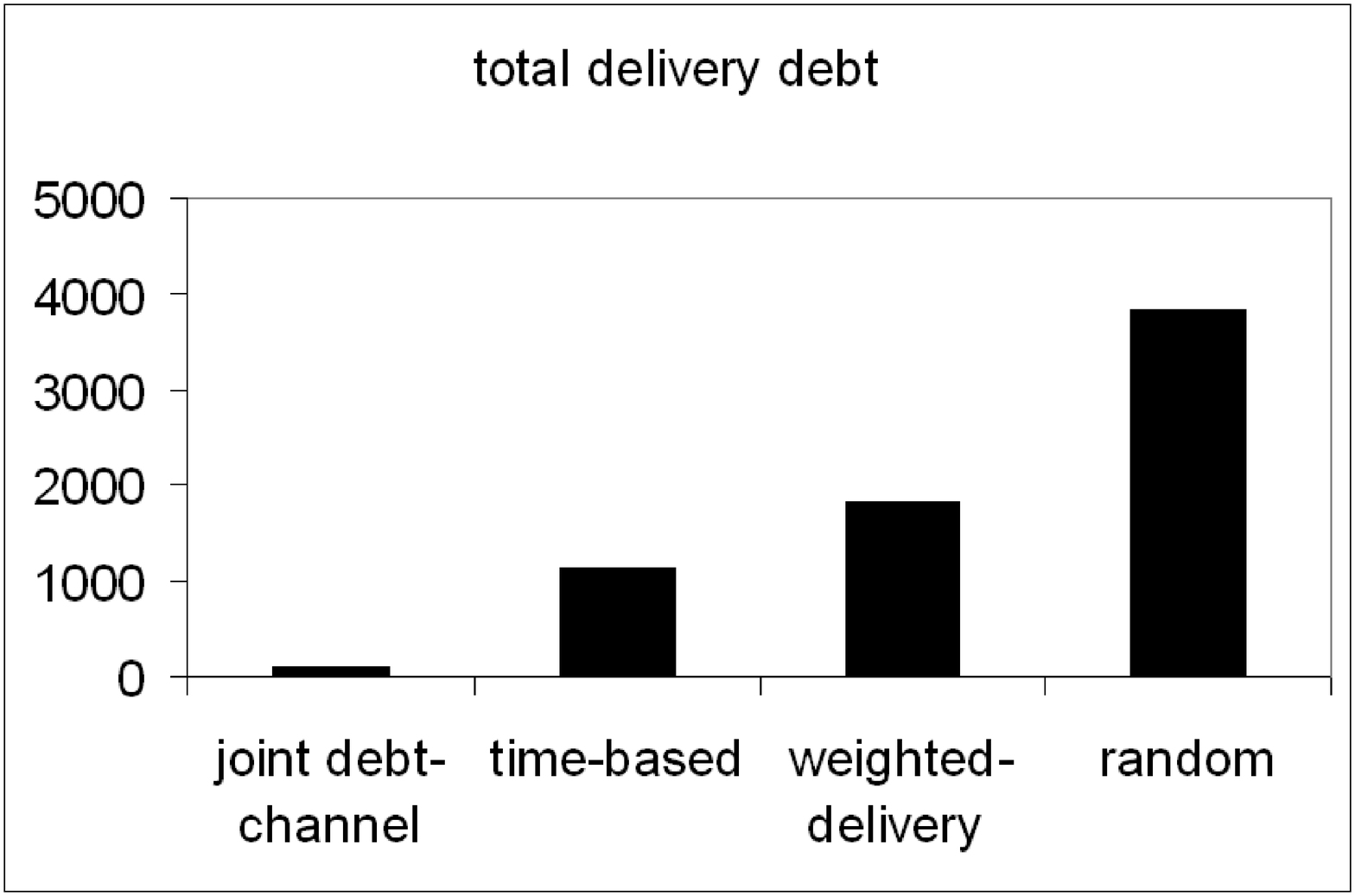}}
\hspace{0.01\linewidth}\subfloat{
\label{fig:video_channel_throughput} 
\includegraphics[width=1.6in]{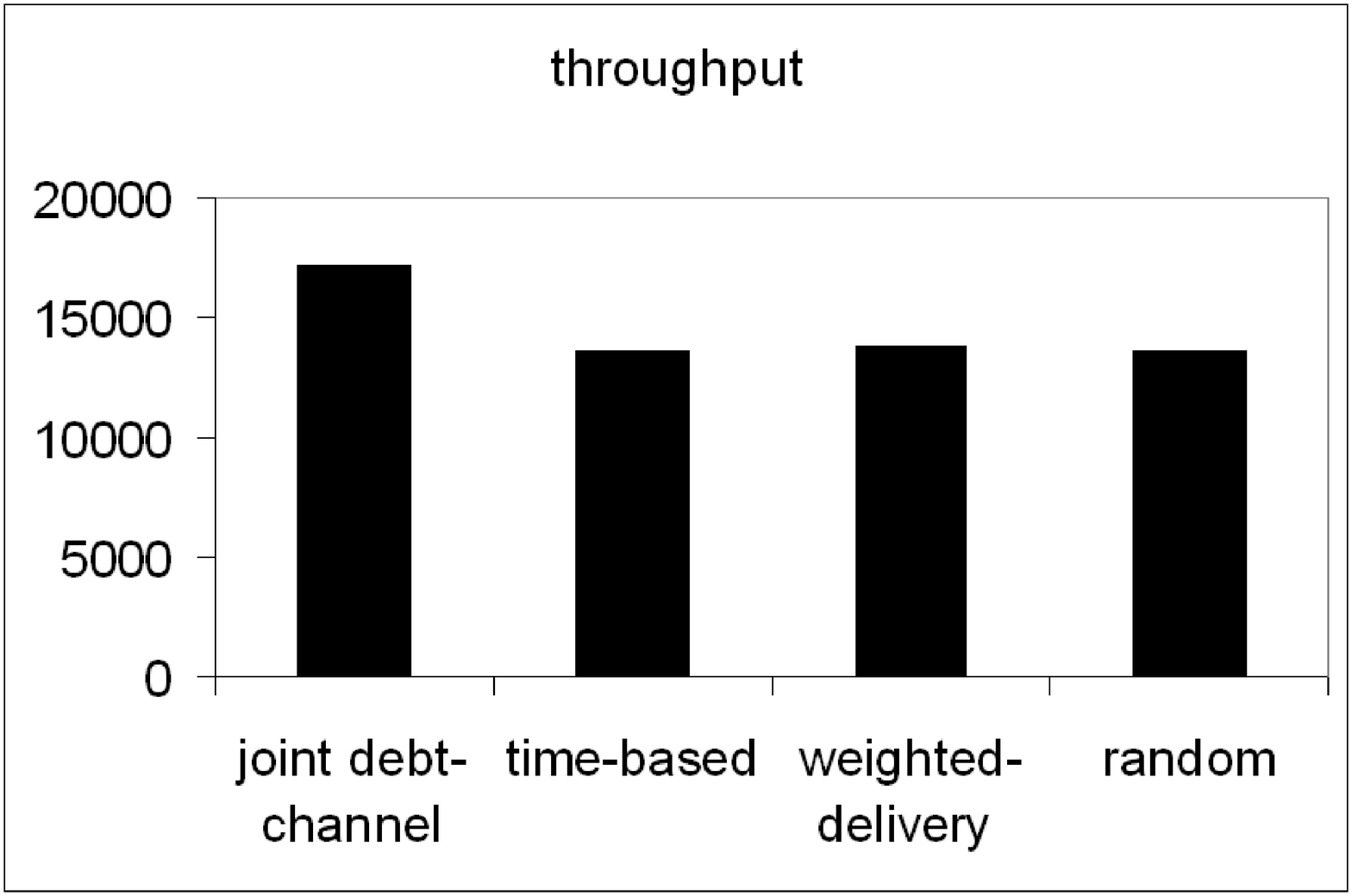}}
\caption{Performance for MPEG traffic under time-varying
channels}\label{fig:video_channel}
\end{figure}

\subsection{Heterogeneous Delay Bounds}
Now, we study the scenario where the channel state is static but
clients require different delay bounds. Since the length of a period
for MPEG traffic is too small, we only simulate VoIP. There are two
groups of clients. All clients generate traffic at rate 64
kbits/sec, and thus each of them has a packet in each period.
Clients in group $A$ require 90$\%$ delivery ratio, with delay
bounds equal to the period length. Clients in group $B$ require
$50\%$ delivery ratio, with delay bounds equal to two-thirds of the
period length, or 22 time slots. The channel reliability for the
$n^{th}$ client in group $A$ is $(84+n)\%$, and that for the
$n^{th}$ client in group $B$ is $(29+n)\%$.

Simulation results are shown in Figure \ref{fig:audio_deadline}. The
adaptive allocation policy has the smallest total delivery debt.
This is because the other policies, especially the two largest debt
first policies, do not consider heterogeneous delay bounds at all.
It is not difficult to see that, to maximize the capacity of the
system, a policy should, in some sense, work in an ``earliest
deadline first'' fashion. Without considering heterogeneous delay
bounds, the largest debt first policies may unwisely schedule
clients with longer delay bounds before those with shorter delay
bounds, and thus result in poor channel utilization. On the other
hand, such poor channel utilization will result in a large number of
idle time slots. Thus, the throughputs for the non-real time traffic
under these policies are higher than those for the adaptive
allocation policy.

\begin{figure}[t]\subfloat{
\label{fig:audio_deadline_debt} 
\includegraphics[width=1.6in]{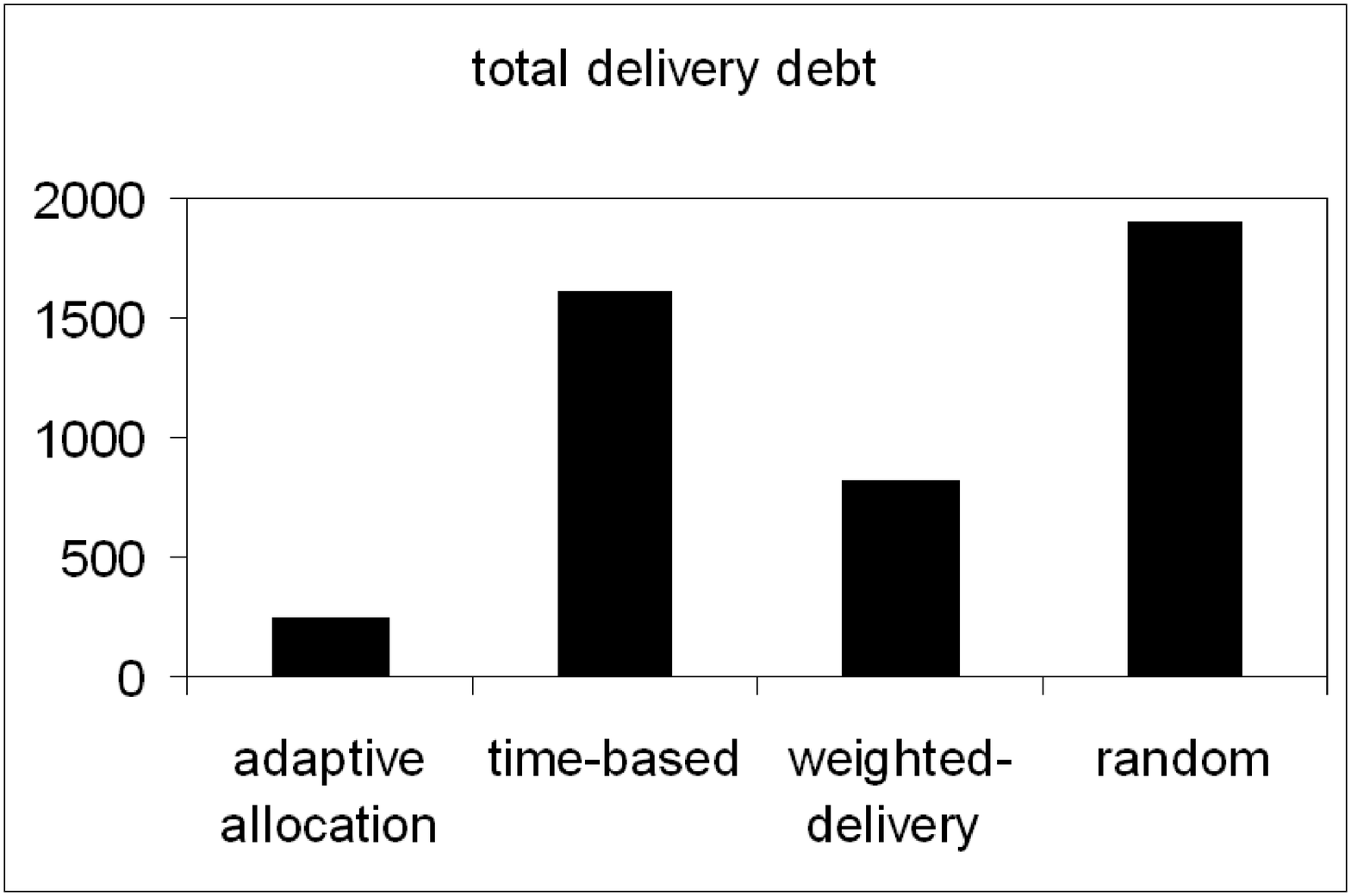}}
\hspace{0.01\linewidth}\subfloat{
\label{fig:audio_deadline_throughput} 
\includegraphics[width=1.6in]{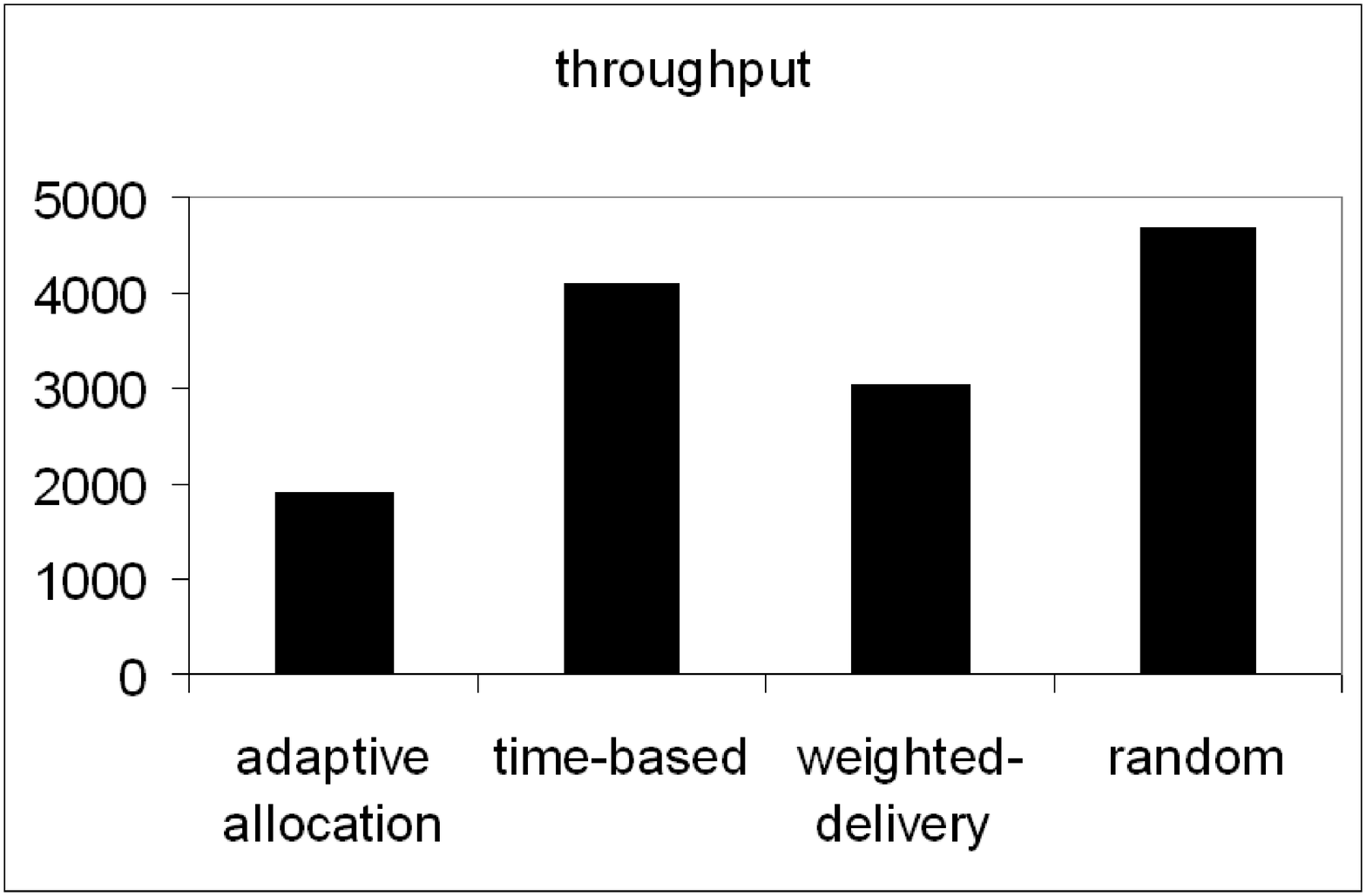}}
\caption{Performance for VoIP traffic under heterogeneous delay
bounds}\label{fig:audio_deadline}
\end{figure}

\section{Conclusion}    \label{section:conclusion}
We have analytically studied the problem of scheduling real-time
traffic over wireless channels. We have extended the model used in
\cite{IHH09MobiHoc} to unreliable wireless channels and real-time
application requirements, including traffic patterns, delay bounds,
and timely-throughput bounds. We have developed a general class of
polices that are feasibility optimal. This class can serve as a
guideline for designing computationally tractable feasibility
optimal policies. We have demonstrated the utility of the class by
deriving scheduling policies for a general case when rate adaptation
is employed and two special cases when it is not, time-varying
channels and heterogeneous delay bounds. Simulation results show
that the policies outperform policies described in
\cite{IHH09MobiHoc}. Thus we have shown not only that the policy
class is useful in designing scheduling policies, but also that
neglecting some realistic and complicated settings can result in
unsatisfactory policies.

\def\baselinestretch{0.85}
\small
\bibliographystyle{plain}
\bibliography{reference}

\def\baselinestretch{1}
\normalsize
\end{document}